\documentclass[11pt]{article}%
\usepackage{amsfonts}
\usepackage{amsmath}
\usepackage{amssymb}
\usepackage{graphicx}%
\providecommand{\U}[1]{\protect\rule{.1in}{.1in}}
\newtheorem{theorem}{Theorem}
\newtheorem{acknowledgement}[theorem]{Acknowledgement}

\newtheorem{example}[theorem]{Example}

\newtheorem{lemma}[theorem]{Lemma}

\newtheorem{proposition}[theorem]{Proposition}
\newtheorem{remark}[theorem]{Remark}

\newenvironment{proof}[1][Proof]{\noindent\textbf{#1.} }{\ \rule{0.5em}{0.5em}}
\begin{document}

\title{On the Energy Minimizing Problem in Plasma Dynamics}
\author{Maurice de Gosson\\University Of Vienna\\Faculty of Mathematics (NuHAG)\\Oskar-Morgenstern-Platz 1, 1090 Vienna}
\maketitle

\begin{abstract}
A fundamental question in plasma physics and its applications to fusion theory
is to find minimum energy stares of plasma.. Hong Qin has suggested the use of
what he calls "Gromov ground states" which take into account the phenomenon of
symplectic rigidity, as opposed to Gardner's classical approach which only use
volume-preserving diffeomorphisms. Howev4er Qin's approach leads to
topological difficulties because his constructions remain in the classical
(Hamiltonian) domain. We circumvent these difficulties by introducing a
quantum-mechanical formulation of the problem, based on the Wigner transform
and the symplectic geometry of Gaussian state

\end{abstract}

Dedicated to the memory of Basil Hiley (1935--2025)

\section{Introduction}

In their study \cite{Qin1,Qin2} of plasma dynamics Hong Qin and his
collaborators ask the following question: let $\Omega$ be a compact, connected
subset of the standard symplectic phase space $(\mathbb{R}^{2n},\sigma)$, and
let $H:\mathbb{R}^{2n}\rightarrow\mathbb{R}$ be a smooth Hamiltonian function.
Under what conditions (if any) does there exist a symplectomorphism
$f\in\operatorname{Symp}(n)$ such that
\begin{equation}
\int_{f(\Omega)}H(z)\,\mathrm{d}z<\int_{\Omega}H(z)\,\mathrm{d}z
\label{original}%
\end{equation}
and can we determine the lowest energy attainable from $\Omega$ by such an
$f$, \textit{i.e.} what Qin \textit{et al.} call the \textquotedblleft Gromov
energy\textquotedblright\
\begin{equation}
E_{\mathrm{Gromov}}(\Omega;H)=\inf_{f\in\operatorname{Symp}(n)}\int%
_{f(\Omega)}H(z)\,\mathrm{d}z\,? \label{gromovdomain}%
\end{equation}
Their question was motivated by the observation that Hamiltonian plasma
dynamics satisfies considerably stronger geometric constraints than mere
preservation of phase-space volume as considered by Gardner in his historical
paper \cite{Gardner}.

It turns out, as we will see, that the answer to Qin's question is in general
negative. This \textquotedblleft no-go\textquotedblright\ result leads us to
replace Hong Qin's initial question with the following one: Given a
Hamiltonian function $H\in C^{\infty}(\mathbb{R}^{2n}\times\mathbb{R}%
_{t},\mathbb{R})$ and a Wigner distribution $\rho\in L^{2}(\mathbb{R}^{2n})$,
does there exist $f\in\operatorname{Symp}(n)$ such that
\begin{equation}
\int_{\mathbb{R}^{2n}}H(z,t)\rho(z)\,\mathrm{d}z>\int_{\mathbb{R}^{2n}%
}H(z,t)\rho(f(z))\,\mathrm{d}z\,? \label{wignerquestion}%
\end{equation}
We will see that for quadratic Hamiltonians this question is closely related
to Qin's initial question; the main difference is that our reformulation
places the problem in the quantum setting, as opposed to Qin's framework which
is classical Hamiltonian mechanics.

\paragraph{Notation.}

The standard symplectic form
\[
\sigma=\sum_{j=1}^{n}\mathrm{d}p_{j}\wedge\mathrm{d}x_{j}%
\]
is written in matrix form as $\sigma(z,z^{\prime})=Jz\cdot z^{\prime}$, where
\[
J=%
\begin{pmatrix}
0_{n\times n} & I_{n\times n}\\
-I_{n\times n} & 0_{n\times n}%
\end{pmatrix}
\]
is the standard symplectic matrix. The symplectic group $\operatorname{Sp}(n)$
is the group of automorphisms $S$ of $\mathbb{R}^{n}\times\mathbb{R}^{n}%
\equiv\mathbb{R}^{2n}$ such that $S^{\ast}\sigma=\sigma$; in matrix notation:
$S\in\operatorname{Sp}(n)$ if and only if $S^{T}JS=J$ (or $SJS^{T}=J$),
$S^{T}$ being the transpose of $S$. We denote by $\operatorname{Symp}(n)$ the
group of all symplectomorphisms and by $\operatorname{Ham}(n)$ its subgroup
consisting of Hamiltonian diffeomorphisms.

\section{\label{sec1}An Example, and a Topological Obstruction}

\subsection{Symplectic non-squeezing}

We recall Gromov's symplectic non-squeezing theorem \cite{gr85}-Let
$B^{2n}(r)$ be the centered phase-space ball
\[
B^{2n}(r)=\{z\in\mathbb{R}^{2n}:|z|\leq r\}
\]
and let
\[
Z_{j}^{2n}(R)=\{(x,p)\in\mathbb{R}^{2n}:x_{j}^{2}+p_{j}^{2}\leq R^{2}\}
\]
be%
\[
B^{2n}(r)\hookrightarrow Z_{j}^{2n}(R)
\]
exists if and only if $r\leq R$. Gromov's theorem implies what is sometimes
called the "Principle of the Symplectic Camel": Let $P_{j}$ be the plane of
conjugate coordinates $x_{j},p_{j}$ in $\mathbb{R}^{2n}$ and $D_{j}$ be a
circular disk with radius $r$ in $P_{j}$. There exists a smooth one-parameter
family $(f_{t})_{t\in\lbrack0,1]}$ of symplectomorphisms of $(\mathbb{R}%
^{2n},\sigma)$ taking an arbitrary phase-space ball $B^{2n}(z,R)$ with radius
$R$ through the disk $D_{j}$ if and only if $R\leq r$. A crucial consequence
of Gromov's non-squeezing theorem is the principle of symplectic rigidity,
which says that volume-preserving diffeomorphisms cannot in general be
approximated by symplectomorphisms (Gromov--Eliashberg theorem \cite{Elia,HZ}%
).The group $\operatorname{Symp}(n)$ is $C^{0}$-closed in $\operatorname{Diff}%
(n)$. Here is a sketch of the proof: let $\phi_{j}\in\operatorname{Symp}(n)$,
$\phi_{j}\rightarrow\phi$, and assume that $\phi$ is a $C^{1}$-diffeomorphism.
We prove that $\phi\in\operatorname{Symp}(n)$. Suppose, on the contrary, that
$\phi$ is not symplectic. Then at some point $z_{0}$ we have $D\phi
(z_{0})^{\ast}\sigma\neq\sigma$. Using the local linearization at $z_{0}$ and
composing with suitable linear symplectic transformations, one obtains a
sufficiently small ball $B^{2n}(r;z_{0})$ whose image by $\phi$ is contained
in a symplectic cylinder of smaller radius:
\[
\phi(B^{2n}(r;z_{0}))\subset Z_{1}^{2n}(R),\qquad R<r.
\]
Since $\phi_{j}\rightarrow\phi$ uniformly, after slightly decreasing $r$ and
slightly increasing $R$ if necessary, the same inclusion holds for all
sufficiently large $j$:
\[
\phi_{j}(B^{2n}(r;z_{0}))\subset Z_{1}^{2n}(R),\qquad R<r.
\]
But $\phi_{j}\in\operatorname{Symp}(n)$, so its restriction gives a symplectic
embedding $B^{2n}(r;z_{0})\hookrightarrow Z_{1}^{2n}(R)$ with $R<r$. This
contradicts Gromov's non-squeezing theorem.

\subsection{The ``water-bag'' example}

Gardner's question \cite{Gardner} may be phrased as follows: for a prescribed
particle distribution and Hamiltonian, what is the minimum energy obtainable
by rearrangements compatible with collisionless dynamics? In the
volume-preserving relaxation, particles may be moved toward low-energy regions
while preserving phase-space volume. Qin and collaborators \cite{Qin2}
illustrate the essential difficulty in the case $n=2$. Denoting by
$(x,y,p_{x},p_{y})\in\mathbb{R}^{4}$ the phase-space variable, they introduce
the Hamiltonian function
\[
H(x,y,p_{x},p_{y})=\frac{1}{2}(p_{x}^{2}+p_{y}^{2}+x^{2}),
\]
which is a simplified model for charged-particle dynamics in accelerators, and
particles initially occupying uniformly the phase-space ball $B^{4}(r)$ (it is
called a a \textquotedblleft water-bag distribution\textquotedblright\ in
\cite{Qin2}). A volume-preserving diffeomorphism can send this ball to a long
thin region of the form
\[
B^{3}(R)\times\lbrack0,L],
\]
where the $B^{3}(R)$ factor involves the variables $(x,p_{x},p_{y})$. Keeping
the four-dimensional volume fixed while decreasing $R$ allows the energy to be
made arbitrarily small by increasing the extent in the energetically neutral
$y$ direction. The situation changes completely if the diffeomorphism is
required to be symplectic. Since the target is contained in the cylinder
\[
\{(x,y,p_{x},p_{y}):x^{2}+p_{x}^{2}\leq R^{2}\},
\]
Gromov's non-squeezing theorem requires $R\geq r$. Hence the volume-preserving
Gardner relaxation may predict states that cannot be reached by Hamiltonian dynamics.

\subsection{A topological obstruction: Polterovich's argument}

The following obstruction to the existence of an energy-minimizing Hamiltonian
image is due to Leonid Polterovich (private communication).

Let $(M,\sigma)$ be a symplectic surface, $D\subset M$ a smoothly embedded
closed disc, and $\gamma=\partial D$. For a smooth function $H:M\rightarrow
\mathbb{R}$, consider
\begin{equation}
\mathcal{E}(f)=\int_{f(D)}H\,\sigma,\qquad f\in\operatorname{Ham}(M,\sigma).
\label{energy}%
\end{equation}
Suppose that $D$ is a critical point of this functional under Hamiltonian
variations (in particular, this holds if $D$ minimizes or maximizes the
integral among its Hamiltonian images). Let $F$ be an arbitrary smooth
Hamiltonian, with Hamiltonian vector field $X_{F}$ and flow $f_{t}$, and adopt
the convention $\iota_{X_{F}}\sigma=-dF$. Cartan's formula and Stokes' theorem
give
\begin{align}
\left.  \frac{d}{dt}\right\vert _{t=0}\int_{f_{t}(D)}H\,\sigma &  =\int%
_{D}\mathcal{L}_{X_{F}}(H\sigma)=\int_{D}d\left(  \iota_{X_{F}}(H\sigma
)\right) \nonumber\\
&  =\int_{\gamma}H\,\iota_{X_{F}}\sigma=-\int_{\gamma}H\,dF=\int_{\gamma
}F\,dH. \label{firstvariation}%
\end{align}
Here $d(H\sigma)=0$ automatically on a surface. At a critical point, the final
integral vanishes for every $F$. Since the restriction of $F$ to $\gamma$ can
be chosen arbitrarily, $dH$ annihilates the tangent bundle $T\gamma$. Thus $H$
is constant along the connected curve $\gamma$, say
\begin{equation}
\gamma\subset H^{-1}(h_{0}). \label{level}%
\end{equation}
Note that this does \emph{not} imply $dH=0$ as a covector on $\gamma$: only
its tangential restriction vanishes. If $h_{0}$ is a regular value, $\gamma$
is a connected component of the level set.

We now specialize to $M=S^{2}$, with total symplectic area normalized to $1$.
Choose a Morse function $H:S^{2}\rightarrow\mathbb{R}$ whose Reeb graph is a
tripod, with one trivalent vertex corresponding to a saddle and three terminal
branches. Denote by $a,b,c$ the symplectic areas of the three regions
associated with these branches, and assume
\begin{equation}
a+b+c=1,\qquad0<a\leq b\leq c<\tfrac{1}{2}. \label{areas}%
\end{equation}
A regular level circle corresponding to a point on the branch of area $a$ cuts
off, on the terminal side, a disc of area at most $a$; the opposite disc has
area at least $1-a$. The analogous statement holds on the other two branches.
Consequently, every regular level circle bounds discs of areas belonging to
\begin{equation}
\lbrack0,c]\cup\lbrack1-c,1]. \label{allowed}%
\end{equation}
The same conclusion holds for smooth embedded circles contained in critical
levels: such a circle cannot pass smoothly through a nondegenerate saddle
while remaining in the critical level, and any remaining smooth circle is on a
regular portion of that level.

\begin{proposition}
Let $D\subset S^{2}$ be a smoothly embedded disc of area $\mu$ with
\begin{equation}
c<\mu<1-c. \label{gap}%
\end{equation}
Then the functional \ref{energy} has no critical point on the Hamiltonian
orbit of $D$, and in particular has no minimizer or maximizer there.
\end{proposition}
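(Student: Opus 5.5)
The argument is by contradiction and uses only two ingredients already set up above: the first-variation formula (\ref{firstvariation}) with its consequence (\ref{level}), and the area constraint (\ref{allowed}). Suppose some Hamiltonian image $D'=g(D)$, with $g\in\operatorname{Ham}(S^{2},\sigma)$, is a critical point of $\mathcal{E}$ under Hamiltonian variations. Since $\mathcal{E}(f\circ g)=\int_{f(D')}H\,\sigma$, criticality of $g$ on the orbit is the same as criticality of the disc $D'$ itself. So we may apply the computation (\ref{firstvariation}) to $D'$ and conclude that $\gamma'=\partial D'$ lies in a single level set $H^{-1}(h_{0})$, as in (\ref{level}).

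\emph{Step 1: area is invariant on the orbit.} Hamiltonian diffeomorphisms preserve $\sigma$, so $D'$ has area $\mu$ and its complement $S^{2}\setminus D'$ has area $1-\mu$. By (\ref{gap}), both $\mu$ and $1-\mu$ lie in the open interval $(c,1-c)$.

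\emph{Step 2: $\gamma'$ is a level circle.} The curve $\gamma'$ is a smoothly embedded circle contained in $H^{-1}(h_{0})$. There are two cases.
\begin{itemize}
\item If $h_{0}$ is a regular value, $\gamma'$ is a connected component of the level set, hence a regular level circle lying on one of the three branches of the Reeb graph.
\item If $h_{0}$ is critical, then by the remark after (\ref{allowed}) the circle $\gamma'$ cannot pass through the saddle. It therefore lies in a regular part of the critical level.
\end{itemize}
The only critical levels besides the saddle level are the extrema, and these are isolated points, which cannot contain a circle. So in every case the discussion preceding (\ref{allowed}) applies: the two discs bounded by $\gamma'$ have areas in $[0,c]\cup[1-c,1]$. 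But these two discs are exactly $D'$ and its complement, whose areas $\mu$ and $1-\mu$ lie in $(c,1-c)$ by Step 1. This is a contradiction.

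\emph{Conclusion.} There is no critical point on the orbit. Since a minimizer or maximizer on the orbit would be a critical point for the one-parameter variations $f_{t}\circ g$, none exists either.

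\emph{Main obstacle.} The delicate step is the critical-level case in Step 2, specifically justifying that a smooth embedded circle inside the saddle level avoids the saddle point. Near a nondegenerate saddle, the level set is locally the union of two transversal arcs, $xy=0$ in Morse coordinates. A smooth embedded curve passing through the origin inside this set would have to follow one arc, and then it can only continue inside the level set along that same arc. I would check that following the arc cannot close up into a smooth circle without either meeting the other arc non-smoothly or coinciding with a regular level circle of the adjacent branch. If it avoids the saddle, it is a component of the regular part of the level, which is the limit of regular level circles of one branch. By continuity of area, its discs then have areas in the closed set $[0,c]\cup[1-c,1]$. I would invoke this limiting argument to settle the critical case cleanly.
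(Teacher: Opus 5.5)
Your proposal is correct and follows the paper's argument. It uses the same three ingredients: area preservation along the Hamiltonian orbit, the first-variation formula (\ref{firstvariation}) forcing $\partial D'$ into a level set of $H$, and the area restriction (\ref{allowed}) on smooth level circles, which is incompatible with (\ref{gap}).

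Your extra case analysis of critical levels only spells out what the paper asserts in its remark after (\ref{allowed}). One small correction there: an extremum \emph{level} need not be a single point, since it can contain a regular circle from another branch. Your circle still avoids the isolated extremum point, so it is a regular level circle and the conclusion is unaffected.
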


Hamiltonian diffeomorphisms preserve symplectic area, so every $f(D)$ has area
$\mu$. If $f(D)$ were critical, the first-variation argument applied at $f(D)$
would force its smooth boundary to lie in a level set of $H$. But
(\ref{allowed}) shows that no smooth level circle bounds a disc of area $\mu$
satisfying (\ref{gap}). This is a contradiction.

The conclusion concerns \emph{nonattainment}: the infimum of (\ref{energy})
still exists when $H$ is bounded, but need not be achieved by any Hamiltonian
image of $D$. Thus the obstruction is topological, encoded by the Reeb graph
and the areas of its branches, rather than a failure of the first-variation
calculation. In particular, one cannot infer existence of an unrestricted
energy-minimizing \textquotedblleft Gromov ground state\textquotedblright%
\ from symplectic area preservation alone. Existence questions for restricted
families, such as Gaussian states or symplectic images of quantum blobs,
require separate arguments. Note the following Sign convention and extremum
direction\textbf{: }the inequality $\int_{D}H\,\sigma\geq\int_{f(D)}H\,\sigma$
for all $f$ characterizes a \emph{maximum} at $D$; for a minimum the
inequality is reversed. Either extremum must satisfy firs tvariation with zero
first derivative.

\section{Gaussian Distributions\label{sec2}}

\subsection{An averaging result}

Let $\Omega\subset\mathbb{R}^{2n}$ be a centered phase space ellipsoid:%
\[
\Omega=\left\{  z\in\mathbb{R}^{2n}:\frac{1}{2}\Sigma^{-1}z\cdot z<1\right\}
\]
with $\Sigma=\Sigma^{T}>0$. Let
\begin{equation}
\rho(z)=\frac{1}{(2\pi)^{n}\sqrt{\det\Sigma}}\exp\left(  -\frac{1}{2}%
\Sigma^{-1}z\cdot z\right)  ,\quad z\in\mathbb{R}^{2n} \label{Gauss}%
\end{equation}
be~a Gaussian probability distribution on $\mathbb{R}^{n}$ with covariance
matrix $\Sigma$ and covariance ellipsoid $\Omega$.

\begin{proposition}
\label{propmoy}Let $\Omega$ be a non-degenerate phase space ellipsoid and
$\rho_{\Omega}$ the associated Gaussian (\ref{Gauss}); Let $H\in C^{\infty
}(\mathbb{R}_{z}^{2n},\mathbb{R})$ be a Hamiltonian function. (i) If $H$ is
homogeneus of degree $k>0$ then we have%
\begin{equation}
\int_{\mathbb{R}^{2n}}H(z)\rho_{\Omega}(z)\,\mathrm{d}z=\frac{\Gamma\left(
n+\frac{k}{2}+1\right)  }{n!}\frac{1}{|\Omega|}\int_{\Omega}H(z)\,\mathrm{d}z.
\label{av1}%
\end{equation}
(ii) In particular, for $k=2$,%
\begin{equation}
\int_{\mathbb{R}^{2n}}H(z)\rho(z)\,\mathrm{d}z=\frac{n+1}{|\Omega|}%
\int_{\Omega}H(z)\,\mathrm{d}z. \label{av2}%
\end{equation}

\end{proposition}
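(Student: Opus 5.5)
The plan is to reduce both sides of (\ref{av1}) to integrals over the unit sphere by a linear change of variables followed by polar coordinates, and then compare the radial factors, which are explicit Gamma-type integrals. Both the Gaussian average and the normalized ellipsoid average turn out to be constant multiples of the same spherical integral of $H$. The claimed identity is exactly the statement that these two constants have the stated ratio.

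\emph{Step 1: normalize the ellipsoid.} With $\Omega$ as defined above, we have $z\in\Omega$ iff $\Sigma^{-1}z\cdot z<2$. Substitute $z=\Sigma^{1/2}u$ and set $G(u)=H(\Sigma^{1/2}u)$. Then $\Omega=\Sigma^{1/2}B^{2n}(\sqrt{2})$, and $G$ is again homogeneous of degree $k$. The Jacobian $\sqrt{\det\Sigma}$ cancels the normalization in (\ref{Gauss}), so that
\[
\int_{\mathbb{R}^{2n}}H\rho_{\Omega}\,\mathrm{d}z=(2\pi)^{-n}\int_{\mathbb{R}^{2n}}G(u)e^{-|u|^{2}/2}\,\mathrm{d}u .
\]
On the ellipsoid side the same Jacobian cancels against $|\Omega|=\sqrt{\det\Sigma}\,|B^{2n}(\sqrt2)|$, giving
\[
\frac{1}{|\Omega|}\int_{\Omega}H\,\mathrm{d}z=\frac{1}{|B^{2n}(\sqrt2)|}\int_{B^{2n}(\sqrt2)}G(u)\,\mathrm{d}u .
\]
The point of this step is that no symplectic structure is needed: the result is purely about Gaussians and their covariance ellipsoids.

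\emph{Step 2: polar coordinates.} Write $u=r\omega$ with $\omega\in S^{2n-1}$, and put $A=\int_{S^{2n-1}}G(\omega)\,\mathrm{d}\omega$. Homogeneity gives $G(r\omega)=r^{k}G(\omega)$, so each side factors as $A$ times a radial integral.
\begin{itemize}
\item The Gaussian side becomes $(2\pi)^{-n}A\int_{0}^{\infty}r^{k+2n-1}e^{-r^{2}/2}\,\mathrm{d}r$. The substitution $s=r^{2}/2$ gives $\int_{0}^{\infty}r^{k+2n-1}e^{-r^{2}/2}\,\mathrm{d}r=2^{n+k/2-1}\Gamma(n+k/2)$, so this side equals $2^{k/2-1}\pi^{-n}\Gamma(n+k/2)\,A$.
\item The ball side becomes $A(\sqrt2)^{k+2n}/\bigl((k+2n)|B^{2n}(\sqrt2)|\bigr)$. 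Using $|B^{2n}(\sqrt2)|=(2\pi)^{n}/n!$, this equals $n!\,2^{k/2}A/\bigl(\pi^{n}(k+2n)\bigr)$.
\end{itemize}

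\emph{Step 3: compare.} The ratio of the two sides is
\[
\frac{\Gamma(n+k/2)(2n+k)}{2\,n!}=\frac{\Gamma(n+k/2+1)}{n!},
\]
by $x\Gamma(x)=\Gamma(x+1)$. This is (\ref{av1}). For $k=2$ the factor is $\Gamma(n+2)/n!=n+1$, which gives (\ref{av2}).

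The only real obstacle is bookkeeping. The factor $\tfrac12$ in the definition of $\Omega$ makes the relevant ball have radius $\sqrt2$ rather than $1$, and the powers of $2$ must then cancel correctly. One should also justify polar coordinates and Fubini. This is immediate: $H$ is continuous and homogeneous of degree $k>0$, so $|G(u)|\le C|u|^{k}$, which makes both integrals absolutely convergent.
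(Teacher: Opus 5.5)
Your proof is correct, but it follows a different route from the paper's. The paper never normalizes to a round ball and never uses polar coordinates. Instead it writes $e^{-q(z)}=\int_{q(z)}^{\infty}e^{-t}\,\mathrm{d}t$ with $q(z)=\frac12\Sigma^{-1}z\cdot z$, observes that $\{q<t\}=\sqrt{t}\,\Omega$, and uses Fubini to get
$$\int H\rho_\Omega\,\mathrm{d}z=\frac{1}{(2\pi)^n\sqrt{\det\Sigma}}\int_0^\infty e^{-t}\Bigl(\int_{\sqrt t\,\Omega}H\,\mathrm{d}z\Bigr)\mathrm{d}t .$$
Homogeneity then enters only through the dilation identity $\int_{\sqrt t\,\Omega}H\,\mathrm{d}z=t^{n+k/2}\int_{\Omega}H\,\mathrm{d}z$. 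This leaves a single Gamma integral, $\int_0^\infty t^{n+k/2}e^{-t}\,\mathrm{d}t=\Gamma(n+\frac k2+1)$, and the factor $1/n!$ comes from $|\Omega|=(2\pi)^n\sqrt{\det\Sigma}/n!$.

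The paper's layer-cake argument buys economy and a by-product. It works directly on $\Omega$, with no linear change of variables and no powers of $2$ to track. Its first step also holds for any $H$ with $H\rho_\Omega$ integrable, homogeneous or not: it writes the Gaussian expectation as an $e^{-t}$-weighted average of integrals over the dilates $\sqrt t\,\Omega$, which is the interpretation the paper draws from it.

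Your route is more computational, needing two radial integrals and the ball volume, but it is entirely elementary and exhibits both sides as explicit multiples of the same spherical integral $A$ of $H\circ\Sigma^{1/2}$. One small phrasing point: in Step 3, state the conclusion as an equality between the Gaussian side and $\Gamma(n+\frac k2+1)/n!$ times the ball side, rather than as a ratio. That way the case $A=0$, where both sides vanish, is covered.
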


\begin{proof}
Put $q(z)=\frac{1}{2}\Sigma^{-1}z\cdot z$. Since
\[
e^{-q(z)}=\int_{q(z)}^{\infty}e^{-t}\,\mathrm{d}t
\]
and $q(z)<t$ if and only if $z\in\sqrt{t}\,\Omega$, Fubini's theorem gives
\[
\int_{\mathbb{R}^{2n}}H(z)\rho_{\Omega}(z)\,\mathrm{d}z=\frac{1}{(2\pi
)^{n}\sqrt{\det\Sigma}}\int_{0}^{\infty}e^{-t}\left(  \int_{\sqrt{t}\,\Omega
}H(z)\,\mathrm{d}z\right)  \mathrm{d}t.
\]
In words, the Wigner expectation of $H$ is a weighted average of the integrals
of $H$ over all homothetic dilations $\sqrt{t}\,\Omega$ of the covariance
ellipsoid. If $H$ is homogeneous of degree $k$, $H(\lambda z)=\lambda^{k}%
H(z)$, then
\[
\int_{\sqrt{t}\,\Omega}H(z)\,\mathrm{d}z=t^{n+k/2}\int_{\Omega}%
H(z)\,\mathrm{d}z,
\]
and consequently
\[
\int_{\mathbb{R}^{2n}}H(z)\rho_{\Omega}(z)\,\mathrm{d}z=\frac{\Gamma\left(
n+\frac{k}{2}+1\right)  }{(2\pi)^{n}\sqrt{\det\Sigma}}\int_{\Omega
}H(z)\,\mathrm{d}z.
\]
Since
\[
|\Omega|=\frac{(2\pi)^{n}}{n!}\sqrt{\det\Sigma},
\]
this becomes
\[
\int_{\mathbb{R}^{2n}}H(z)\rho(z)\,\mathrm{d}z=\frac{\Gamma\left(  n+\frac
{k}{2}+1\right)  }{n!}\frac{1}{|\Omega|}\int_{\Omega}H(z)\,\mathrm{d}z.
\]
For $k=2$ this reduces to
\[
\int_{\mathbb{R}^{2n}}H(z)\rho(z)\,\mathrm{d}z=(n+1)\frac{1}{|\Omega|}%
\int_{\Omega}H(z)\,\mathrm{d}z.
\]

\end{proof}

Writing
\begin{equation}
\langle H\rangle_{\Omega}=\frac{1}{|\Omega|}\int_{\Omega}H(z)\,\mathrm{d}z,
\label{av3}%
\end{equation}
formula (\ref{av2}) can be written
\begin{equation}
\int_{\mathbb{R}^{2n}}H(z)\rho(z)\,\mathrm{d}z=(n+1)\langle H\rangle_{\Omega}
\label{av4}%
\end{equation}
when $H$ is a quadratic Hamiltonian.

\subsection{The quantum case}

Recall that the Wigner transform (or function) $W\psi$ of $\psi\in
L^{2}(\mathbb{R}^{n})$ is defined by
\begin{equation}
W\psi(z)=\left(  \tfrac{1}{2\pi\hbar}\right)  ^{n}\int_{\mathbb{R}^{n}%
}e^{-\frac{i}{\hbar}p\cdot y}\psi(x+\tfrac{1}{2}y)\psi^{\ast}(x-\tfrac{1}%
{2}y)dy. \label{wig}%
\end{equation}
It is often convenient to use the alternative definition \cite{WIGNER}%
\begin{equation}
W\psi(z)=\left(  \tfrac{1}{\pi\hbar}\right)  ^{n}(\widehat{R}(z)\psi
|\psi)_{L^{2}} \label{wig2}%
\end{equation}
where $\widehat{R}(z)$ is the Grossmann--Royer reflection operator
\cite{WIGNER}%
\begin{equation}
\widehat{R}(z)=\widehat{T}(z)\widehat{R}\widehat{T}(z)^{-1}\text{ \ ,
\ }\widehat{R}(x)\psi=\psi(-x); \label{GR}%
\end{equation}
where $\widehat{T}(z)=e^{-\frac{it}{\hslash}\sigma(\hat{z},z_{0})}$ is the
Heisenberg--Weyl displacement operator%
\begin{equation}
\psi(x,t)=\psi_{0}(x)\psi_{0}(x)\text{.} \label{psixt}%
\end{equation}
exPlicitly given by
\begin{equation}
\widehat{T}(z_{0})\psi(x)=e^{\tfrac{i}{\hslash}(p_{0}\cdot x-\tfrac{1}{2}%
p_{0}\cdot x_{0})}\psi_{0}(x-x_{0}). \label{HW}%
\end{equation}

We now assume that the covariance matrix $\Sigma$ in (\ref{Gauss}) satisfies
the the "quantum condition"%
\begin{equation}
\Sigma+\frac{i\hbar}{2}J\geq0 \label{quantum}%
\end{equation}
which implies \cite{Birkbis} that the Gaussian function i the Wigner
distribution of a \ quantum state Condition (\ref{quantum}) is equivalent to
the Robertson--Schr\"{o}dinger uncertainty relations, and implies tat the
covariance ellipsoid $\Omega$ contains a "quantum blob", i.e. a symplectic
ball $S(B^{2n}(\sqrt{\hbar})$, $S\in\operatorname*{Sp}(n)$.

Recall the "Wigner--Weyl" identity \cite{Birkbis,WIGNER,Littlejohn}: .

\begin{lemma}
\label{Lemma1}Let $\widehat{H}=\operatorname*{Op}_{\mathrm{W}}(H)$ be the Weyl
quantization of\ $H$; we assume that $\widehat{H}:\mathcal{S}(\mathbb{R}%
^{n})\rightarrow\hookrightarrow\mathcal{S}(\mathbb{R}^{n})$. We have, or
$\psi\in\mathcal{S}(\mathbb{R}^{n})$,
\[
(\widehat{H}\psi|\psi)_{L^{2}}=\int_{\mathbb{R}^{n}}H(z)W\psi(z)z
\]

\end{lemma}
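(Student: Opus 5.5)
We prove the identity in the form $(\widehat{H}\psi|\psi)_{L^{2}}=\int_{\mathbb{R}^{2n}}H(z)W\psi(z)\,\mathrm{d}z$, the integral being over phase space. The plan is to write $\widehat{H}$ through its distributional kernel and then unfold the definition (\ref{wig}) of $W\psi$ by a change of variables. Concretely, we use the standard kernel formula for the Weyl quantization,
\[
\widehat{H}\psi(x)=\left(\tfrac{1}{2\pi\hbar}\right)^{n}\int_{\mathbb{R}^{2n}}e^{\frac{i}{\hbar}p\cdot(x-y)}H\!\left(\tfrac{1}{2}(x+y),p\right)\psi(y)\,\mathrm{d}y\,\mathrm{d}p ,
\]
valid first for $H\in\mathcal{S}(\mathbb{R}^{2n})$ and then, by duality, for $H\in\mathcal{S}'(\mathbb{R}^{2n})$ under the standing assumption that $\widehat{H}$ maps $\mathcal{S}(\mathbb{R}^{n})$ into itself.

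\textbf{Step 1: the Schwartz case.} First take $H\in\mathcal{S}(\mathbb{R}^{2n})$ and $\psi\in\mathcal{S}(\mathbb{R}^{n})$. Pairing with $\psi$ gives
\[
(\widehat{H}\psi|\psi)_{L^{2}}=\left(\tfrac{1}{2\pi\hbar}\right)^{n}\int e^{\frac{i}{\hbar}p\cdot(x-y)}H\!\left(\tfrac{1}{2}(x+y),p\right)\psi(y)\psi^{\ast}(x)\,\mathrm{d}x\,\mathrm{d}y\,\mathrm{d}p .
\]
Substitute $x=u-\tfrac{1}{2}v$ and $y=u+\tfrac{1}{2}v$, which has Jacobian $1$. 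Then $\tfrac{1}{2}(x+y)=u$ and $x-y=-v$. The integrand becomes
\[
H(u,p)\,e^{-\frac{i}{\hbar}p\cdot v}\psi(u+\tfrac{1}{2}v)\psi^{\ast}(u-\tfrac{1}{2}v).
\]
The inner $v$-integral is exactly $(2\pi\hbar)^{n}W\psi(u,p)$ by (\ref{wig}). The factors $(2\pi\hbar)^{\pm n}$ cancel, leaving $\int_{\mathbb{R}^{2n}}H(z)W\psi(z)\,\mathrm{d}z$. Fubini applies because the integrand lies in $L^{1}(\mathbb{R}^{3n})$: $H$ is Schwartz in $(u,p)$, and $\psi\otimes\psi^{\ast}$ is Schwartz in $(u,v)$.

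\textbf{Step 2: extension to general $H$.} For $\psi\in\mathcal{S}(\mathbb{R}^{n})$ we have $W\psi\in\mathcal{S}(\mathbb{R}^{2n})$, since it is a partial Fourier transform of the Schwartz function $(x,y)\mapsto\psi(x+\tfrac{1}{2}y)\psi^{\ast}(x-\tfrac{1}{2}y)$. So the right-hand side makes sense as the distributional pairing $\langle H,W\psi\rangle$. Choose a sequence $H_{j}\in\mathcal{S}(\mathbb{R}^{2n})$ with $H_{j}\to H$ in $\mathcal{S}'(\mathbb{R}^{2n})$. Then $\widehat{H}_{j}\to\widehat{H}$ weakly as maps $\mathcal{S}\to\mathcal{S}'$, because the kernel depends continuously on the symbol: it is obtained from $H$ by a linear change of variables followed by a partial Fourier transform. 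Hence $(\widehat{H}_{j}\psi|\psi)\to(\widehat{H}\psi|\psi)$. Passing to the limit in the identity of Step 1 gives the result. When $H$ is a genuine function of polynomial growth, for example the quadratic Hamiltonians used later, the pairing is an honest Lebesgue integral, since $W\psi$ decays rapidly.

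\textbf{Main obstacle.} The only delicate point is justifying the interchange of integrals when $H$ is not integrable, for instance when $H$ is quadratic. Step 2 handles this by density in $\mathcal{S}'$. As a cross-check I would also verify the formula through the alternative representation (\ref{wig2}). Writing $\widehat{H}=\left(\tfrac{1}{\pi\hbar}\right)^{n}\int H(z)\widehat{R}(z)\,\mathrm{d}z$ and pairing with $\psi$ recovers the identity directly. However, that route presupposes the Grossmann--Royer form of the Weyl correspondence, so the kernel computation above is the primary argument.
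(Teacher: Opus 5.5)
Your proof is correct, but it takes a different route from the paper. The paper does not compute with the kernel. It rewrites $\widehat{H}$, citing the literature, as a superposition of Grossmann--Royer reflections, $\widehat{H}\psi=(\pi\hbar)^{-n}\int H(z_0)\widehat{R}(z_0)\psi\,\mathrm{d}z_0$. It then pairs with $\psi$ and recognizes $(\pi\hbar)^{-n}(\widehat{R}(z)\psi|\psi)_{L^2}$ as $W\psi(z)$ by (\ref{wig2}), so the lemma becomes linearity of the pairing.

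That is shorter and conceptually neat: the symbol weights the reflections, and the Wigner function is their expectation. However, all the content sits in the cited reflection representation, and the interchange of the $z_0$-integral with the inner product is not justified. For the quadratic $H$ to which the lemma is applied in Theorem \ref{Thm1}, the integral $\int H(z_0)\widehat{R}(z_0)\psi\,\mathrm{d}z_0$ does not converge absolutely, since each $\widehat{R}(z_0)$ is unitary and $H$ is not integrable. It must therefore be read weakly, and in that weak form the representation is essentially the polarized version of the lemma itself.

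Your argument works directly from the definition (\ref{wig}) and the standard kernel formula. The unimodular substitution $x=u-\tfrac12 v$, $y=u+\tfrac12 v$ turns the $v$-integral into $(2\pi\hbar)^nW\psi(u,p)$, and Fubini is legitimate for Schwartz symbols. The density extension in $\mathcal{S}'$ then covers polynomially growing symbols, including the quadratic Hamiltonians used later. It relies on $W\psi\in\mathcal{S}(\mathbb{R}^{2n})$ and on the kernel depending continuously on the symbol. Your route is thus self-contained, supplies exactly the justification the paper omits, and is in effect the computation that proves the reflection representation the paper quotes. You also correctly read the right-hand side as an integral over $\mathbb{R}^{2n}$ with respect to $\mathrm{d}z$, which repairs a misprint in the statement.
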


\begin{proof}
Rewriting the textbook formula%
\[
\widehat{H}\psi(x)=\left(  \tfrac{1}{2\pi\hbar}\right)  ^{n}\int%
_{\mathbb{R}^{n}}e^{\frac{i}{\hbar}p\cdot(x-y)}H(x+\tfrac{1}{2}y,x-\tfrac
{1}{2}y)\psi(y)dy
\]
as (see \cite{WIGNER})
\[
\widehat{H}\psi=\left(  \tfrac{1}{\pi\hbar}\right)  ^{n}\int_{\mathbb{R}^{2n}%
}H(z_{0})\widehat{R}(z_{0})\psi dz_{0}.
\]
where $\widehat{R}(z)$ is the Grossmann--Royer operator (\ref{GR}), we have,
using (\ref{wig2})
\begin{align*}
(\widehat{H}\psi|\psi)_{L^{2}}  &  =\left(  \tfrac{1}{\pi\hbar}\right)
^{n}\int_{\mathbb{R}^{2n}}H(z)(\widehat{R}(z)\psi|\psi)_{L^{2}}dz\\
&  =\int_{\mathbb{R}^{n}}H(z)W\psi(z)dz.
\end{align*}

\end{proof}

We have:

\begin{proposition}
Assume that the Gaussian $\rho$ satisfies the condition (\ref{quantum}). There
exists an orthonormal sequence of functions $\psi_{j}\in L^{2}(\mathbb{R}%
^{n})$ and numbers $\lambda_{j}\geq0$, $\sum_{j}\lambda_{j}=1$ such that%
\begin{equation}
\int_{\mathbb{R}^{2n}}H(z)\rho(z)\,\mathrm{d}z=\sum_{j}\lambda_{j}%
(\widehat{H}\psi_{j}|\psi_{j})_{L^{2}}. \label{convex}%
\end{equation}

\end{proposition}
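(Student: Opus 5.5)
The plan is to realize $\rho$ as the Wigner distribution of a density operator, and then use Lemma \ref{Lemma1} term by term on its spectral decomposition.

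First I will construct the operator. The quantum condition (\ref{quantum}) guarantees, by the result quoted from \cite{Birkbis}, that the Gaussian $\rho$ is the Wigner distribution of a quantum state. Concretely, let $\widehat{\rho}$ be the Weyl operator with symbol $(2\pi\hbar)^{n}\rho$, normalized so that $\rho$ is its Wigner distribution. Condition (\ref{quantum}) is exactly the necessary and sufficient condition for $\widehat{\rho}$ to be positive semidefinite (the Gaussian case of the Narcowich--Kastler--Loupias criterion). In addition, $\operatorname{Tr}\widehat{\rho}=\int\rho\,\mathrm{d}z=1$, and $\widehat{\rho}$ is self-adjoint because $\rho$ is real. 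A cleaner way to see all of this explicitly is to diagonalize. By Williamson's theorem, $\Sigma=S D S^{T}$ with $S\in\operatorname{Sp}(n)$ and $D=\operatorname{diag}(\Lambda,\Lambda)$, where $\Lambda=\operatorname{diag}(\nu_1,\dots,\nu_n)$; condition (\ref{quantum}) is equivalent to $\nu_j\geq\hbar/2$. By symplectic covariance of Weyl calculus, $\widehat{\rho}=\widehat{S}\,\widehat{\rho}_D\,\widehat{S}^{-1}$ with $\widehat{S}$ metaplectic, and $\widehat{\rho}_D$ factors as a tensor product of one-dimensional thermal states. Each thermal state is diagonal in the Hermite basis with geometric weights $(1-q_j)q_j^{m}$, where $q_j=(2\nu_j-\hbar)/(2\nu_j+\hbar)\in[0,1)$. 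This exhibits $\widehat{\rho}=\sum_j\lambda_j\,|\psi_j\rangle\langle\psi_j|$, with $\psi_j=\widehat{S}(h_{m_1}\otimes\cdots\otimes h_{m_n})$ orthonormal, $\lambda_j\geq0$ and $\sum_j\lambda_j=1$.

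Second, I will pass to Wigner functions. The Wigner map is linear in the rank-one operator $|\psi\rangle\langle\psi|$, so
\[
\rho=\sum_j\lambda_j W\psi_j .
\]
The series converges in $L^{2}(\mathbb{R}^{2n})$, since $\|W\psi_j\|_{L^2}=(2\pi\hbar)^{-n/2}$ by Moyal's identity and $\sum\lambda_j<\infty$. Multiplying by $H$ and integrating, Lemma \ref{Lemma1} gives $\int H\,W\psi_j\,\mathrm{d}z=(\widehat{H}\psi_j|\psi_j)_{L^2}$. Here the $\psi_j$ are Schwartz functions (metaplectic images of Hermite functions), so the lemma applies.

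The main obstacle is the interchange of $\sum_j$ with $\int H\cdot\,\mathrm{d}z$ when $H$ is unbounded, for instance quadratic. For polynomial $H$ I would argue via the trace: the left side equals $\operatorname{Tr}(\widehat{\rho}\widehat{H})$, which is finite because the Gaussian decay of $\rho$ makes $\widehat{\rho}\widehat{H}$ trace class. Equivalently, the weights $\lambda_j$ decay geometrically while $(\widehat{H}\psi_j|\psi_j)$ grows only polynomially in the Hermite indices, so the series converges absolutely. Dominated convergence, or simply computing $\operatorname{Tr}(\widehat{\rho}\widehat{H})$ in the eigenbasis, then justifies the interchange. In general the statement should be read for $H$ with at most polynomial growth.
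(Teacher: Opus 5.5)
Your proof is correct and follows the paper's route: use the quantum condition to write $\rho=\sum_j\lambda_j W\psi_j$ for an orthonormal family $\psi_j$, integrate against $H$, and apply Lemma~\ref{Lemma1} term by term. Your explicit Williamson/thermal-state construction of the $\psi_j$ and your justification of the sum--integral interchange for polynomially growing $H$ fill in details that the paper's proof only asserts.
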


\begin{proof}
The condition (\ref{quantum}) implies that $\rho$ i the Wigner distribution of
\ quantum state, hence there exists an orthonormal sequence of functions
$\psi_{j}\in L^{2}(\mathbb{R}^{n})$ such that \
\[
\rho=\sum_{j}\lambda_{j}W\psi_{j}\text{ \ , \ }\lambda_{j}\geq0,\sum
_{j}\lambda_{j}=1.
\]
and thus%
\[
\int_{\mathbb{R}^{2n}}H(z)\rho(z)\,\mathrm{d}z=\sum_{j}\lambda_{j}%
\int_{\mathbb{R}^{2n}}H(z)W\psi_{j}(z)\mathrm{d}z.
\]
In view of Lemma (\ref{Lemma1}) this is precisely (\ref{convex}).
\end{proof}

\section{Quadratic Hamiltonians}

Recall \cite{Birk} that the symplectic group $\operatorname*{Sp}(n)$ is
generated by $J$ together with all matrices
\[
V_{-P}=%
\begin{pmatrix}
I & 0\\
P & I
\end{pmatrix}
\text{ \ and }M_{L}=%
\begin{pmatrix}
I & 0\\
P & I
\end{pmatrix}
\]
with $P=P^{T}$ and $\det L\neq0$.

\subsection{Pure Gaussian states}

Assume that the covariance ellipsoid $\Omega$ of the Gaussian distribution
(\ref{Gauss}) is a quantum blob $Q=SB^{2n}(\sqrt{\hbar})$, $S\in
\operatorname*{Sp}(n)$, hence the covariance matrix is%
\[
\Sigma=\frac{\hbar}{2}SS^{T}\text{ }.
\]
It follows \cite{Birk,blob,goluPR,Littlejohn} that (\ref{Gauss}) is the Wigner
distribution of a Gaussian function of the type
\begin{equation}
\psi_{X,Y}(x)=\left(  \tfrac{\det X}{(\pi\hbar)^{n}}\right)  ^{1/4}%
e^{-\frac{1}{2\hbar}(X+iY)x\cdot x} \label{Gausso}%
\end{equation}
($X,Y$ real symmetric and$X>0$), generalizing the standard Gaussian standard
Gaussian
\[
\phi_{0}(x)=(\pi\hbar)^{-n/4}e^{-|x|^{2}/2\hbar}.
\]
In fact, its Wigner transform is
\begin{equation}
W\psi_{X,Y}(z)=\left(  \frac{1}{\pi\hbar}\right)  ^{n}e^{-\frac{1}{\hbar
}G_{XY}z\cdot z} \label{goupsi}%
\end{equation}
where $G$ is a positive definite symplectic matrix
\begin{equation}
G_{X,Y}=S_{X,Y}^{T}S_{X,Y}\text{ \ , \ }S_{X,Y}=%
\begin{pmatrix}
X^{1/2} & 0\\
X^{-1/2}Y & X^{-1/2}%
\end{pmatrix}
. \label{sexy}%
\end{equation}
Note that since we have
\[
G_{X,Y}=S_{X,Y}^{T}S_{X,Y}=(SS^{T})^{-1}%
\]
the initial quantum blob is just%
\[
Q=SB^{2n}(\sqrt{\hbar})=S_{X,Y}^{-1}B^{2n}(\sqrt{\hbar})
\]

It follows \cite{blob,goluPR} that there is a canonical bijective
identifications of the set of quantum blobs and the set of pure Gaussian
quantum states $|\psi_{X,Y}\rangle$:%
\begin{equation}
\Phi:S_{X,Y}^{-1}B^{2n}(B^{2n}(\sqrt{\hbar}))\longleftrightarrow\psi
_{X,Y}=\widehat{S}_{X,Y}\phi_{0} \label{bijection}%
\end{equation}
where $\phi_{0}$ is the standard Gaussian and $\widehat{S}\in
\operatorname*{Mp}(n)$ any of the two metaplectic operators covering $S$,
namely
\begin{equation}
\widehat{S}_{X,Y}=\widehat{V}_{Y}\widehat{M}_{X^{1/2}} \label{sexy2}%
\end{equation}
where $\widehat{V}_{Y}$ and $\widehat{M}_{X^{1/2}}$ are obtained from the
elementary generators of $\operatorname*{Mp}(n)$ \ given by
\cite{Birk,Birkbis}
\begin{equation}
\widehat{V}_{P}\psi(x)=e^{--\frac{i}{2\hbar}Px\cdot x}\psi(x)\text{
\ },\widehat{M}_{L}\psi(x)=\sqrt{\det L}\psi(Lx) \label{unit}%
\end{equation}
covering the symplectic matrices $\widehat{V}_{P}$ and $M_{L}$, hence we have
$\psi_{X,Y}=\widehat{S}_{XY}\phi_{0}$ with $\widehat{S}_{XY}=\widehat{V}%
_{Y}\widehat{M}_{X^{1/2}}\in\operatorname*{Mp}(n)$.

Summarizing:

\begin{proposition}
\label{prop2}Every quantum blob can be written $S_{XY}B^{2n}(\sqrt{\hbar})$
for a unique choice of matrices $X,Y$. The covariancw ellipsoid of
$\ \psi_{X,Y}$ is then precisely $S_{X,Y}^{-1}B^{2n}(\sqrt{\hbar})$.
\end{proposition}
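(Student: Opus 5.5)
The plan is to reduce the statement to a factorization of positive definite symplectic matrices. A quantum blob is a set $Q=SB^{2n}(\sqrt{\hbar})$ with $S\in\operatorname*{Sp}(n)$. It can be written as
\[
Q=\{z:(SS^{T})^{-1}z\cdot z\leq\hbar\}.
\]
Hence $Q$ depends on $S$ only through the positive definite symplectic matrix $G=(SS^{T})^{-1}$. Conversely, two symplectic matrices $S,S'$ give the same blob if and only if $S'=SU$ with $U\in\operatorname*{Sp}(n)\cap O(2n)=U(n)$, because the stabilizer of the ball in $\operatorname*{Sp}(n)$ is the unitary group. It therefore suffices to show that every $G=G^{T}>0$ in $\operatorname*{Sp}(n)$ equals $S_{X,Y}^{T}S_{X,Y}$ for exactly one pair $(X,Y)$ with $X=X^{T}>0$ and $Y=Y^{T}$. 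Then $Q=S_{X,Y}^{-1}B^{2n}(\sqrt{\hbar})$; I read the proposition's $S_{XY}$ in this sense, consistent with the second sentence and with (\ref{bijection}).

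For the factorization, first expand the product using (\ref{sexy}):
\[
S_{X,Y}^{T}S_{X,Y}=\begin{pmatrix}X+YX^{-1}Y & YX^{-1}\\ X^{-1}Y & X^{-1}\end{pmatrix}.
\]
Given $G=\begin{pmatrix}A&B\\B^{T}&C\end{pmatrix}$ positive definite and symplectic, set $X=C^{-1}$, which is symmetric and positive definite since $C>0$. Then set $Y=XB^{T}=C^{-1}B^{T}$. Two things must be checked.

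\begin{itemize}
\item \emph{$Y$ is symmetric.} This is the main obstacle, and it is where symplecticity enters. From $GJG=J$, or equivalently $G^{-1}=JGJ^{T}$, the relations $G^{T}JG=J$ written in blocks give $B^{T}C=C^{T}B$, i.e. $CB$ is symmetric. Hence $C^{-1}B^{T}=C^{-1}(CB)^{T}C^{-1}$ is symmetric. I would verify carefully which block relation (of $A^{T}C-B^{T}B=I$, $A^{T}B$ symmetric, and $B^{T}C$ symmetric, suitably adapted to the symmetric $G$) is the one used.
\item \emph{The upper-left block matches.} We need $A=X+YX^{-1}Y=C^{-1}+BC^{-1}B^{T}$. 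This also follows from the remaining block identity, of the form $AC-B^{2}=I$ for symmetric symplectic $G$.
\end{itemize}

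Uniqueness is then immediate: the lower-right block forces $X=C^{-1}$, and the off-diagonal block forces $Y=XB^{T}$.

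Finally, the covariance statement. By (\ref{goupsi}), $W\psi_{X,Y}$ is the Gaussian (\ref{Gauss}) with $\Sigma^{-1}=\frac{2}{\hbar}G_{X,Y}$. Its covariance ellipsoid is therefore
\[
\{z:\tfrac{1}{\hbar}G_{X,Y}z\cdot z<1\}=\{z:|S_{X,Y}z|^{2}<\hbar\}=S_{X,Y}^{-1}B^{2n}(\sqrt{\hbar}).
\]
This is the claimed blob, up to the open/closed convention, which I will note. Combined with the factorization above, this gives the bijection (\ref{bijection}).
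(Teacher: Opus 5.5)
Your argument is correct, but it takes a different route from the paper.

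\textbf{The paper's route.} It invokes the pre-Iwasawa factorization $S=V_{-P}M_LU$ and discards $U\in\operatorname{Sp}(n)\cap O(2n)$ by rotational invariance of the ball. It then matches $V_{-P}M_L$ block by block with $S_{X,Y}$, obtaining $X=L^{-2}$ and $Y=X^{1/2}PX^{1/2}$. This proves the literal form $SB^{2n}(\sqrt{\hbar})=S_{X,Y}B^{2n}(\sqrt{\hbar})$. Uniqueness is left implicit, inherited from uniqueness of that factorization.

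\textbf{Your route.} You factor the Gram matrix $G=(SS^T)^{-1}$ directly as $S_{X,Y}^TS_{X,Y}$. The block identities you hesitated over are the right ones: for symmetric symplectic $G$, the relation $G^TJG=J$ gives exactly $B^TC=CB$ and $AC-B^2=I$. For the upper-left block you additionally use $BC^{-1}=C^{-1}B^T$, which is just $Y=Y^T$.

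\textbf{What each buys.} Your approach is self-contained and gives explicit formulas $X=C^{-1}$, $Y=C^{-1}B^T$, so uniqueness is immediate. It is essentially a hands-on proof of the part of the pre-Iwasawa factorization that the paper cites. It also lands exactly on the parametrization $S_{X,Y}^{-1}B^{2n}(\sqrt{\hbar})$, which coincides with the covariance ellipsoid of $\psi_{X,Y}$. That is what the identification $\Phi$ requires. The paper's proof does not cleanly deliver this, since it ends by calling that ellipsoid $S_{XY}B^{2n}(\sqrt{\hbar})$.

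\textbf{The reinterpretation.} Reading the first sentence as $S_{X,Y}^{-1}B^{2n}(\sqrt{\hbar})$ costs nothing, but you should say why. We have
$S_{X,Y}^{-1}=\begin{pmatrix}X^{-1/2}&0\\-YX^{-1/2}&X^{1/2}\end{pmatrix}=S_{X',Y'}$ with $(X',Y')=(X^{-1},-X^{-1/2}YX^{-1/2})$. This map is an involution of $\{X>0\}\times\{Y=Y^T\}$, so existence and uniqueness transfer verbatim to the literal form $S_{X',Y'}B^{2n}(\sqrt{\hbar})$. Adding that one line makes your proof cover both readings.
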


\begin{proof}
Let $S\in\operatorname*{Sp}(n)$; using \ pre-Iwasawaa factorization
\cite{Birk,Houde} we can fin $L=L^{T}>0$ and $P=P^{T}$ such that%
\[
S=V_{-P}M_{L}U\text{ \ , \ }U\in\operatorname*{Sp}(n)\cap O(2n,\mathbb{R}).
\]
By rotational symmetry we thus have
\[
SB^{2n}(\sqrt{\hbar})=V_{-P}M_{L}B^{2n}(\sqrt{\hbar}).
\]
Solving the equation%
\[%
\begin{pmatrix}
L^{-1} & 0\\
PL^{-1} & L
\end{pmatrix}
=%
\begin{pmatrix}
X^{1/2} & 0\\
X^{-1/2}Y & X^{-1/2}%
\end{pmatrix}
\]
yields $X=L^{-2}$ and $Y=$ $X^{1/2}PX^{1/2}$. That the covariance ellipsoid of
$\ \psi_{X,Y}$ is $S_{XY}B^{2n}(\sqrt{\hbar})$ has been established above..
\end{proof}

We note that Lieb and Ostrover \cite{Lieb} prove that for every $r>0$ he
standard normalized Gaussian is the unique maximizer on the ball $B^{2n}(R)$of
the Wigner distributions of a functions $\psi\in L^{2}(\mathbb{R}^{n})$,
$||\psi||_{L^{2}}=1$.

The Gaussians $\psi_{X,Y}$ are the eigenfunctions of a quadratic partial
differential equation generalizing the harmonic oscillator. Setting
\[
\Phi(x)=-\frac{1}{2}Yx\cdot x\text{ \ , \ \ }R(x)=\exp\left(  -\frac{1}%
{2\hbar}Xx\cdot x\right)
\]
we have
\begin{equation}
\nabla_{x}\Phi(x)=-Yx\text{ \ , \ }\frac{\nabla_{x}^{2}R(x)}{R(x)}=-\frac
{1}{\hbar}\operatorname*{Tr}X+\frac{1}{\hbar^{2}}X^{2}x\cdot x \label{tr}%
\end{equation}
and one verifies that \
\[
\widehat{H}_{XY}\psi_{X,Y}=\left(  \frac{1}{2}\hbar\operatorname*{Tr}X\right)
\psi_{X,Y}%
\]
where
\[
\widehat{H}_{XY}=\frac{1}{2}(-i\hbar\nabla_{x}+Yx)^{2}+\frac{1}{2}X^{2}x\cdot
x.
\]
is the Weyl quantization of%

\begin{equation}
H_{XY}(z)=\frac{1}{2}(p+Yx)^{2}+\frac{1}{2}X^{2}x\cdot x; \label{gf3}%
\end{equation}
The latter can be rewritten as
\[
H_{XY}(z)=\frac{1}{2}M_{XY}z\cdot z
\]
where $M_{XY}=M_{YX}^{T}>0$ i given by%
\[
M_{XY}=%
\begin{pmatrix}
X^{2}+Y^{2} & Y\\
Y & I
\end{pmatrix}
\]
which can be written as the product
\begin{equation}
M_{XY}=S_{X,Y}%
\begin{pmatrix}
X & 0\\
0 & X
\end{pmatrix}
S_{X,Y}^{T}\emph{\ \ ,}\text{\ }S_{X,Y}=%
\begin{pmatrix}
X^{1/2} & 0\\
X^{-1/2}Y & X^{-1/2}%
\end{pmatrix}
. \label{mfs}%
\end{equation}
tat is

\subsection{The non-degenerate case: $M>0$}

Let $M$ be a real symmetric positive definite $2n\times2n$ matrix: $M=M^{T}%
>0$. In view of Williamson's diagonalization theorem \cite{Birk,HZ,Houde}
there exists a (non unique) $S\in\operatorname*{Sp}(n)$ such that%
\begin{align*}
M  &  =S_{M}^{T}DS_{M},\qquad D=%
\begin{pmatrix}
\Lambda & 0\\
0 & \Lambda
\end{pmatrix}
\\
\quad\Lambda &  =\operatorname{diag}(\lambda_{1},\ldots,\lambda_{n}%
),\quad\lambda_{j}>0.
\end{align*}
where the$\lambda_{j}>0$ are the symplectic eigenvalues of $JM$, that \ is the
$\pm i\lambda_{j}$ are the eigenvalues of $JM$ (that is, of the antisymmetric
matrix $M^{1/2}JM^{1/2}$).

We consider the quadratic Hamiltonian function%
\[
H(z)=\tfrac{1}{2}Mz\cdot z=\tfrac{1}{2}DS_{M}z\cdot S_{M}z.
\]
Its Weyl quantization $\widehat{H}=^{-1}\operatorname{Op}_{\mathrm{W}}(H)$ is
the operator
\[
\widehat{H}=\widehat{S_{M}}^{-1}\operatorname{Op}_{\mathrm{W}}%
(D)\widehat{S_{M}}%
\]
where $\widehat{S_{M}}\in\operatorname*{Mp}(n)$ is anyone of the two
metaplectic operators covering $S_{M}$. Explicitly%
\[
\operatorname{Op}_{\mathrm{W}}(D)=\frac{1}{2}\sum_{j=1}^{n}\lambda_{j}\left(
-\hbar^{2}\frac{\partial^{2}}{\partial x_{j}^{2}}+x_{j}^{2}\right)  .
\]

To prove our min. theorem, we use the following technical lemma:

Let $K$ be a real symmetric positive-definite $2n\times2n$ matrix. Then
\begin{equation}
\int_{\mathbb{R}^{2n}}|z|^{2}e^{-Kz\cdot z}\,dz=\frac{\pi^{n}}{2\sqrt{\det K}%
}\operatorname{Tr}(K^{-1}).
\end{equation}

\begin{lemma}
\label{lemma2}Let $K$ be a real symmetric positive-definite $2n\times2n$
matrix. Then
\begin{equation}
\int_{\mathbb{R}^{2n}}|z|^{2}e^{-Kz\cdot z}\,dz=\frac{\pi^{n}}{2\sqrt{\det K}%
}\operatorname{Tr}(K^{-1}). \label{integral}%
\end{equation}

\end{lemma}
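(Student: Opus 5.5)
The plan is to reduce the integral to a standard Gaussian integral by diagonalizing $K$ orthogonally, and then evaluate one-dimensional moments. Since $K=K^{T}>0$, choose an orthogonal matrix $O\in O(2n,\mathbb{R})$ with $K=O^{T}\Delta O$, where $\Delta=\operatorname{diag}(\kappa_{1},\ldots,\kappa_{2n})$ and $\kappa_{j}>0$. We make the change of variables $w=Oz$. Its Jacobian is $|\det O|=1$, and it preserves the Euclidean norm, $|z|^{2}=|w|^{2}$. The integral therefore becomes
\[
\int_{\mathbb{R}^{2n}}|w|^{2}e^{-\Delta w\cdot w}\,dw=\sum_{k=1}^{2n}\int_{\mathbb{R}^{2n}}w_{k}^{2}\,e^{-\sum_{j}\kappa_{j}w_{j}^{2}}\,dw .
\]
Using orthogonal rather than symplectic diagonalization (Williamson) is essential here, because the weight $|z|^{2}$ is only $O(2n)$-invariant.

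Each summand factors by Fubini into one-dimensional integrals. We use the elementary formulas
\[
\int_{\mathbb{R}}e^{-\kappa t^{2}}dt=\sqrt{\pi/\kappa},\qquad \int_{\mathbb{R}}t^{2}e^{-\kappa t^{2}}dt=\frac{1}{2\kappa}\sqrt{\pi/\kappa}.
\]
The second follows by differentiating the first with respect to $\kappa$, or by integrating by parts. The $k$-th summand is then
\[
\frac{1}{2\kappa_{k}}\prod_{j=1}^{2n}\sqrt{\pi/\kappa_{j}}=\frac{1}{2\kappa_{k}}\,\frac{\pi^{n}}{\sqrt{\det K}},
\]
since $\prod_{j}\kappa_{j}=\det\Delta=\det K$ and $\pi^{2n/2}=\pi^{n}$. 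Summing over $k$ gives $\frac{\pi^{n}}{2\sqrt{\det K}}\sum_{k}\kappa_{k}^{-1}$. Finally, $\sum_{k}\kappa_{k}^{-1}=\operatorname{Tr}(\Delta^{-1})=\operatorname{Tr}(O^{T}\Delta^{-1}O)=\operatorname{Tr}(K^{-1})$ by invariance of the trace under conjugation, which yields (\ref{integral}).

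Alternatively, one can avoid diagonalization. Set $I(K)=\int e^{-Kz\cdot z}dz=\pi^{n}(\det K)^{-1/2}$ and differentiate along $K+\varepsilon I$ at $\varepsilon=0$. This gives $\int|z|^{2}e^{-Kz\cdot z}dz=-\frac{d}{d\varepsilon}I(K+\varepsilon I)$. By Jacobi's formula, $\frac{d}{d\varepsilon}\det(K+\varepsilon I)=\det K\,\operatorname{Tr}(K^{-1})$, and the same result follows.

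There is no real obstacle. The only points needing care are justifying differentiation under the integral sign (or Fubini), which holds by dominated convergence since $K>0$ gives Gaussian decay, and keeping track of the exponent $\pi^{n}$ coming from dimension $2n$.
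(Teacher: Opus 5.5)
Your proof is correct. Its main line (orthogonal diagonalization $K=O^{T}\Delta O$, Fubini, and the one-dimensional moments $\int t^{2}e^{-\kappa t^{2}}dt=\frac{1}{2\kappa}\sqrt{\pi/\kappa}$) differs from the paper's. Your \emph{alternative} paragraph, by contrast, is exactly the paper's proof: take $I(K)=\pi^{n}(\det K)^{-1/2}$, differentiate $I(K+tI_{2n})$ at $t=0$ to obtain $-\int|z|^{2}e^{-Kz\cdot z}\,dz$, and evaluate that derivative with Jacobi's formula.

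The diagonalization argument is fully explicit. It needs only Tonelli for a nonnegative integrand, with no interchange of $d/dt$ and $\int$ and no Jacobi formula. It also shows where $\operatorname{Tr}(K^{-1})$ comes from: each eigendirection contributes $\frac{1}{2\kappa_{k}}$ times the normalization $\pi^{n}(\det K)^{-1/2}$.

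The paper's route is shorter and basis-free. Differentiating along $K+tA$ instead of $K+tI_{2n}$ gives at once $\int(Az\cdot z)e^{-Kz\cdot z}\,dz=\frac{\pi^{n}}{2\sqrt{\det K}}\operatorname{Tr}(AK^{-1})$ for any symmetric $A$. With $A=M$ this yields (\ref{trace1}) without the substitution $u=M^{1/2}z$, and hence also for the semidefinite $M$ used in Theorem~\ref{Thm2}. On the other hand, it presupposes the closed form of $I(K)$ for all $K$ near the given one, which is usually proved by the very diagonalization you use, so both arguments rest on the same fact.

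Your justification of differentiation under the integral sign by dominated convergence fills a step the paper leaves implicit.
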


\begin{proof}
The standard Gaussian integral is
\[
I(K)=\int_{\mathbb{R}^{2n}}e^{-Kz\cdot zz}\,dz=\pi^{n}(\det K)^{-1/2}.
\]
For $t$ near zero, $K+tI_{2n}$ is positive definite. Differentiating under the
integral sign gives
\[
\left.  \frac{d}{dt}I(K+tI_{2n})\right\vert _{t=0}=-\int_{\mathbb{R}^{2n}%
}|z|^{2}e^{-z^{T}Kz}\,dz.
\]
On the other hand, Jacobi's formula yields
\[
\frac{d}{dt}\det(K+tI_{2n})=\det(K+tI_{2n})\operatorname{Tr}((K+tI_{2n}%
)^{-1}).
\]
Hence
\[
\left.  \frac{d}{dt}I(K+tI_{2n})\right\vert _{t=0}=-\frac{\pi^{n}}{2\sqrt{\det
K}}\operatorname{Tr}(K^{-1}).
\]
Comparing the last two identities proves \eqref{integral}.
\end{proof}

We next prove a \ "symplectic trace inequality":

\begin{lemma}
\label{lemmatrace}Let $D=\operatorname{diag}(\Lambda,\Lambda)$, where
$\Lambda=\operatorname{diag}(\lambda_{1},\ldots,\lambda_{n})$ and $\lambda
_{j}>0$. For every $R\in\operatorname{Sp}(n)$,
\begin{equation}
\operatorname{Tr}(DRR^{T})\geq2\sum_{j=1}^{n}\lambda_{j}=\operatorname{Tr}D.
\label{symtrace}%
\end{equation}
Equality holds if and only if $R\in\operatorname{Sp}(n)\cap O(2n)$.
\end{lemma}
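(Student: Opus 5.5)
The plan is to reduce the inequality to a statement about the positive definite symplectic matrix $P=RR^{T}$. Then apply a Cauchy--Schwarz estimate, separately in each conjugate plane $(x_j,p_j)$.

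First I note that $P=RR^{T}$ is symmetric, positive definite and symplectic, since $\operatorname{Sp}(n)$ is closed under transposition. Because $D=\operatorname{diag}(\Lambda,\Lambda)$ is diagonal,
\[
\operatorname{Tr}(DRR^{T})=\sum_{j=1}^{n}\lambda_{j}\bigl(P_{jj}+P_{n+j,n+j}\bigr).
\]
Since every $\lambda_j>0$, it suffices to prove $P_{jj}+P_{n+j,n+j}\geq2$ for each $j$.

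For this I use the positive square root $P^{1/2}$. It is again symplectic, by the standard fact that the positive definite square root of a positive definite symplectic matrix is symplectic; one can see this through the polar decomposition, or by noting that $P^{-1}=JP^{T}J^{-1}$ is similar to $P^{T}$ via $J$ and taking square roots. Set $u=P^{1/2}e_{j}$ and $v=P^{1/2}e_{n+j}$. Since $P^{1/2}$ is symmetric and symplectic, we have $P^{1/2}JP^{1/2}=J$. Hence
\[
u\cdot Jv=e_{j}\cdot Je_{n+j}=1 .
\]
As $J$ is orthogonal, Cauchy--Schwarz and the AM--GM inequality give
\[
1\leq|u||v|\leq\tfrac12\bigl(|u|^{2}+|v|^{2}\bigr)=\tfrac12\bigl(P_{jj}+P_{n+j,n+j}\bigr),
\]
which is the desired block inequality. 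Summing with the weights $\lambda_j$ yields (\ref{symtrace}).

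For the equality case, the converse is immediate: if $R$ is orthogonal then $P=I$. Conversely, suppose equality holds. Since all $\lambda_j>0$, every block inequality must be an equality, so $P_{jj}+P_{n+j,n+j}=2$ for all $j$, and therefore $\operatorname{Tr}P=2n$. The eigenvalues of a positive definite symplectic matrix occur in pairs $\mu,\mu^{-1}$, and $\mu+\mu^{-1}\geq2$ with equality only when $\mu=1$. Hence $\operatorname{Tr}P=2n$ forces all eigenvalues to equal $1$, so $P=I$, that is $RR^{T}=I$ and $R\in\operatorname{Sp}(n)\cap O(2n)$.

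The only step needing care is the symplecticity of $P^{1/2}$ (equivalently, the identity $u\cdot Jv=1$). If I want to avoid it, I would instead take $u=R^{T}e_j$ and $v=R^{T}e_{n+j}$. Then $|u|^{2}=P_{jj}$ and $|v|^{2}=P_{n+j,n+j}$, and $u\cdot Jv=e_j\cdot RJR^{T}e_{n+j}=e_j\cdot Je_{n+j}=1$ follows directly from $RJR^{T}=J$. This is the cleaner route, and it is the one I would actually write.
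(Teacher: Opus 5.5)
Your proof is correct. For the inequality, the route you say you would actually write is exactly the paper's. The paper takes the rows $r_k^T$ of $R$ (your $R^Te_k$), reads $r_j^TJr_{n+j}=1$ off the $(j,n+j)$ entry of $RJR^T=J$, and finishes with Cauchy--Schwarz and AM--GM. Your $P^{1/2}$ variant is also valid, but unnecessary.

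The equality case is where you genuinely differ.

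\emph{The paper's argument.} It stays with the row vectors. Equality forces $\|r_j\|=\|r_{n+j}\|=1$ and equality in Cauchy--Schwarz, hence $r_{n+j}=-Jr_j$. This says $RJ=JR$, and combined with $RJR^T=J$ it gives $RR^T=I$.

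\emph{Your argument.} You discard the vector information and keep only the scalar consequence $\operatorname{Tr}(RR^T)=2n$. You then use the reciprocal symmetry $\mu\leftrightarrow\mu^{-1}$ of the spectrum of the positive definite symplectic matrix $P=RR^T$.

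\emph{What each buys.} The paper's argument is entirely elementary. It also shows directly that equality forces $R$ to commute with $J$, i.e.\ to be complex-linear, which is the usual description of $\operatorname{Sp}(n)\cap O(2n)\cong U(n)$. Yours needs a spectral fact about symplectic matrices. In exchange, it reduces the equality case for arbitrary $D$ to the $D$-free statement that $\operatorname{Tr}(RR^T)\geq 2n$ on $\operatorname{Sp}(n)$, with equality only for orthogonal $R$.

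\emph{A small simplification.} You can avoid tracking multiplicities in the pairing. From $P^{-1}=J^{-1}P^{T}J$ you get $\operatorname{Tr}P^{-1}=\operatorname{Tr}P$, hence $2\operatorname{Tr}P=\sum_i(\mu_i+\mu_i^{-1})\geq 4n$, with equality if and only if every $\mu_i=1$, i.e.\ $P=I$.
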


\begin{proof}
Write the rows of $R$ as $r_{1}^{T},\ldots,r_{2n}^{T}$, with each $r_{k}$ a
column vector. Since $RJR^{T}=J$, its $(j,n+j)$ entry yields
\[
r_{j}^{T}Jr_{n+j}=1,\qquad j=1,\ldots,n.
\]
As $J$ is orthogonal, the Cauchy--Schwarz and arithmetic--geometric mean
inequalities imply
\begin{equation}
1=|r_{j}^{T}Jr_{n+j}|\leq\Vert r_{j}\Vert\,\Vert r_{n+j}\Vert\leq\frac{\Vert
r_{j}\Vert^{2}+\Vert r_{n+j}\Vert^{2}}{2}. \label{eq:key}%
\end{equation}
Now $(RR^{T})_{kk}=\Vert r_{k}\Vert^{2}$, so
\[
\operatorname{Tr}(DRR^{T})=\sum_{j=1}^{n}\lambda_{j}(\Vert r_{j}\Vert
^{2}+\Vert r_{n+j}\Vert^{2})\newline\geq2\sum_{j=1}^{n}\lambda_{j}.
\]
This proves (\ref{symtrace}). For equality, since every $\lambda_{j}$ is
strictly positive, equality must hold in (\ref{symtrace}). for every $j$. Thus
$\Vert r_{j}\Vert=\Vert r_{n+j}\Vert=1$ and equality in Cauchy--Schwarz gives
$Jr_{n+j}=r_{j}$, equivalently $r_{n+j}=-Jr_{j}$. These relations say that $R$
commutes with $J$. Indeed, the $j$th row of $RJ$ is $r_{j}^{T}J$ and the
$(n+j)$th row of $R$ is $r_{n+j}^{T}=r_{j}^{T}J$; similarly the $(n+j)$th row
of $RJ$ is $-r_{j}^{T}$. Hence $RJ=JR$. Combining $RJ=JR$ with $RJR^{T}=J$
yields $JRR^{T}=J$, hence $RR^{T}=I$. Conversely, if $R$ is orthogonal, then
$RR^{T}=I$ and equality is immediate.
\end{proof}

Setting
\[
\langle H\rangle_{XY}=\int_{\mathbb{R}^{2n}}H(z)W\psi_{X,Y}(z)\,\mathrm{d}%
z=(\widehat{H}\psi_{X,Y}\mid\psi_{X,Y})_{L^{2}}%
\]
we have:

\begin{theorem}
\label{Thm1}Let $\ H$ be as quadratic Hamiltonian as above. (i) We have%
\begin{equation}
\int_{S_{X,Y}^{-1}B^{2n}(\sqrt{\hbar})}H(z)\,\,\mathrm{d}z=\frac{(\pi
\hbar)^{n}}{(n+1)!}\langle H\rangle_{XY}. \label{ave}%
\end{equation}
(ii) We have%
\begin{equation}
\langle H\rangle_{XY}=\frac{\hbar}{4}\operatorname{Tr}(MG_{XY}^{-1})
\label{trace1}%
\end{equation}
(iii) We have
\[
\min_{X>0,Y=Y^{T}}\langle H\rangle_{XY}=\frac{\hbar}{2}\sum_{j=1}^{n}%
\lambda_{j}\text{ \ , \ }\sup_{X>0,Y=Y^{T}}\langle H\rangle_{XY}=+\infty
\]

\end{theorem}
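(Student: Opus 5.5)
For (i) I would apply Proposition \ref{propmoy}(ii) to the Gaussian $\rho=W\psi_{X,Y}$. First I identify its covariance matrix. Comparing (\ref{goupsi}) with (\ref{Gauss}) gives $\frac12\Sigma^{-1}=\frac1\hbar G_{XY}$, that is $\Sigma=\frac\hbar2 G_{XY}^{-1}$. The covariance ellipsoid is then
\[
\Omega=\{z:G_{XY}z\cdot z<\hbar\}=S_{X,Y}^{-1}B^{2n}(\sqrt\hbar),
\]
using $G_{XY}=S_{X,Y}^TS_{X,Y}$. Since $S_{X,Y}$ is symplectic, $|\Omega|=|B^{2n}(\sqrt\hbar)|=(\pi\hbar)^n/n!$. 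Because $H=\frac12Mz\cdot z$ is homogeneous of degree $2$, formula (\ref{av2}) together with Lemma \ref{Lemma1} gives $\langle H\rangle_{XY}=\frac{n+1}{|\Omega|}\int_\Omega H\,\mathrm dz$. Rearranging yields (\ref{ave}).

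For (ii) the cleanest route is the second-moment identity for a centered Gaussian, $\int Mz\cdot z\,\rho\,\mathrm dz=\operatorname{Tr}(M\Sigma)$. This gives
\[
\langle H\rangle_{XY}=\tfrac12\operatorname{Tr}(M\Sigma)=\tfrac\hbar4\operatorname{Tr}(MG_{XY}^{-1}).
\]
If I want to stay within the paper's tools, I would derive this moment identity from Lemma \ref{lemma2}. I would substitute $z=K^{-1/2}M^{1/2}$-type coordinates, or more simply replace $|z|^2$ by $Mz\cdot z$ and differentiate $I(K+tM)$ in $t$ in the same way. Jacobi's formula then gives $\int Mz\cdot z\,e^{-Kz\cdot z}\mathrm dz=\frac{\pi^n}{2\sqrt{\det K}}\operatorname{Tr}(MK^{-1})$. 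Taking $K=G_{XY}/\hbar$, using $\det G_{XY}=1$, and inserting the prefactor $(\pi\hbar)^{-n}$ produces exactly $\frac\hbar4\operatorname{Tr}(MG_{XY}^{-1})$.

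For (iii) I would use Williamson's form $M=S_M^TDS_M$ and set $R=S_MS_{X,Y}^{-1}\in\operatorname{Sp}(n)$. Since $G_{XY}^{-1}=S_{X,Y}^{-1}(S_{X,Y}^{-1})^T$, cyclicity of the trace gives
\[
\operatorname{Tr}(MG_{XY}^{-1})=\operatorname{Tr}(DRR^T).
\]
The symplectic trace inequality, Lemma \ref{lemmatrace}, then yields $\langle H\rangle_{XY}\geq\frac\hbar4\operatorname{Tr}D=\frac\hbar2\sum_j\lambda_j$. Equality holds if and only if $R\in\operatorname{Sp}(n)\cap O(2n)$, that is, if and only if $G_{XY}=S_M^TS_M$ (possibly after replacing $S_M$ by $US_M$ with $U$ a symplectic rotation).

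The step I expect to be the main obstacle is showing that this minimizer is actually attained inside the family $\{G_{XY}\}$. This requires knowing that every positive definite symplectic matrix $G$ can be written $S_{X,Y}^TS_{X,Y}$ for some $X>0$, $Y=Y^T$. I would obtain this from Proposition \ref{prop2}: the quantum blob $S_M^{-1}B^{2n}(\sqrt\hbar)$ equals $S_{X,Y}^{-1}B^{2n}(\sqrt\hbar)$ for suitable $X,Y$. This means $S_{X,Y}S_M^{-1}$ maps the ball onto itself, hence is orthogonal, so $G_{XY}=S_M^TS_M$. One has to be careful with the paper's conventions ($S$ versus $S^{-1}$, $V_{-P}$, and so on), but the pre-Iwasawa factorization handles this.

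For the supremum, I would choose $X,Y$ so that $R=\operatorname{diag}(tI,t^{-1}I)$, which is again possible by the same surjectivity. Then $\operatorname{Tr}(DRR^T)=\sum_j\lambda_j(t^2+t^{-2})\to+\infty$ as $t\to\infty$. Alternatively, I could take $Y=0$, $X=t^2I$ directly and observe that $\operatorname{Tr}(MG^{-1})$ is unbounded because $M>0$.
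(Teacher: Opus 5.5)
Your proof is correct. For (i) and for the lower bound in (iii) it is the paper's argument: Proposition \ref{propmoy}(ii) with $|S_{X,Y}^{-1}B^{2n}(\sqrt\hbar)|=(\pi\hbar)^n/n!$, then Williamson's form and Lemma \ref{lemmatrace} with $R=S_MS_{X,Y}^{-1}$.

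In (ii) the paper substitutes $u=M^{1/2}z$ and applies Lemma \ref{lemma2} with $K=\frac1\hbar M^{-1/2}G_{XY}M^{-1/2}$. Your version differentiates $I(K+tM)$, or uses $\int Mz\cdot z\,\rho\,\mathrm{d}z=\operatorname{Tr}(M\Sigma)$ with $\Sigma=\frac\hbar2G_{XY}^{-1}$. It never inverts $M$, so it gives (\ref{trace1}) for every symmetric $M$. That is also what the proof of Theorem \ref{Thm2} relies on for semidefinite $M$.

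The substantive difference is in (iii). The paper's proof ends with $\operatorname{Tr}(MG_{XY}^{-1})\geq2\sum_j\lambda_j$. It shows neither that this bound is attained (needed to write $\min$) nor that the supremum is $+\infty$; your proposal supplies both.
\begin{itemize}
\item \emph{Attainment.} $R$ is orthogonal if and only if $G_{XY}=S_M^TS_M$ exactly, so your replacement $S_M\mapsto US_M$ is unnecessary. It then only remains that $(X,Y)\mapsto G_{X,Y}$ is onto the positive definite symplectic matrices.
\item \emph{Supremum.} With $Y=0$, $X=t^2I$ you get $\langle H\rangle_{XY}=\frac\hbar4\left(t^{-2}\operatorname{Tr}M_{11}+t^{2}\operatorname{Tr}M_{22}\right)$, where $M_{22}>0$ is the lower diagonal block. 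This tends to $+\infty$.
\end{itemize}

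On the convention issue you flagged: Proposition \ref{prop2}, as actually proved, writes a blob as $S_{X,Y}B^{2n}(\sqrt\hbar)$, not $S_{X,Y}^{-1}B^{2n}(\sqrt\hbar)$. So apply the pre-Iwasawa factorization to $S_M^{-1}=V_{-P}M_LU$ instead. With the paper's $V_{-P}M_L=\begin{pmatrix}L^{-1}&0\\PL^{-1}&L\end{pmatrix}$, one gets $(V_{-P}M_L)^{-1}=S_{X,Y}$ with $X=L^2$, $Y=-P$, hence $S_{X,Y}=US_M$.

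More directly, take a positive definite symplectic $G$ with $n\times n$ blocks $G_{ij}$ and put $X=G_{22}^{-1}$, $Y=G_{22}^{-1}G_{21}$. The symmetry of $Y$ and the identity $G_{11}=X+YX^{-1}Y$ both follow from $GJG=J$.
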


\begin{proof}
(i) The volume of a quantum blob being
\[
|SB^{2n}(\sqrt{\hbar})|=|B^{2n}(\sqrt{\hbar})|=\frac{(\pi\hbar)^{n}}{n!}%
\]
we have, by formula (\ref{av2}) in Proposition \ref{propmoy},
\[
\int_{S_{X,Y}^{-1}B^{2n}(\sqrt{\hbar})}H(z)dz==\frac{(\pi\hbar)^{n}}%
{(n+1)!}\langle H\rangle_{XY}.
\]
(ii) Let us evaluate $\langle H\rangle_{XY}$. We have
\[
\langle H\rangle_{XY}=\frac{1}{2}(\pi\hbar)^{-n}\int_{\mathbb{R}^{2n}}(Mz\cdot
z)e^{-\frac{1}{\hbar}G_{XY}z\cdot z}\mathrm{d}z
\]
and setting $u=M^{1/2}z$ and
\[
K=\ \ \frac{1}{\hbar}M^{-1/2}G_{XY}M^{-1/2}%
\]
we get
\begin{equation}
\langle H\rangle_{XY}=\frac{1}{2}(\pi\hbar)^{-n}\det M^{-1/2}\int%
_{\mathbb{R}^{2n}}(|u|^{2}e^{-Ku\cdot u}\mathrm{d}u. \label{hxy}%
\end{equation}
\ \ \ \ \ \ \ \ \ \ \ \ \ \ Applying formula (\ref{integral}) in Lemma
\ref{lemma2} yields, since $\det G_{XY}=1$,$,$%
\begin{equation}
\int_{\mathbb{R}^{2n}}|z|^{2}e^{-Kz\cdot z}\,dz=\frac{\pi^{n}}{2}\hbar
^{n}\hbar\det M^{1/2}\operatorname{Tr}(MG_{XY}^{-1}).
\end{equation}
Inserting that expression in (\ref{hxy}) we get (\ref{trace1}). (iii)
Recalling the Williamson diagonalization $M=S_{M}^{T}DS_{M}$ and the
factorization $G_{X,Y}=S_{X,Y}^{T}S_{X,Y}$ \ \ we have
\begin{align*}
\operatorname{Tr}(MG_{XY}^{-1}))  &  =\operatorname*{Tr}((S_{M}^{T}%
DS_{M})(S_{X,Y}^{-1})(S_{X,Y}^{T})^{-1})\\
&  =\operatorname*{Tr}(DS_{M}(S_{X,Y}^{-1})(S_{X,Y}^{T})^{-1}S_{M}^{T})
\end{align*}
Applying formula (\ref{symtrace}) in Lemma \ref{lemmatrace} with
$R=S_{M}(S_{X,Y}^{-1})$ we have,
\begin{equation}
\operatorname{Tr}(MG_{XY}^{-1})\geq2\sum_{j=1}^{n}\lambda_{j}.
\end{equation}

\end{proof}

\subsection{The case $M\geq0$}

Let $M=M^{T}\geq0$ be a real $2n\times2n$ matrix. Consider again
$H(z)=\frac{1}{2}z^{T}Mz$ and its Weyl quantization $\widehat{H}$.

\begin{theorem}
\label{Thm2}Let $M$ be a real symmetric positive semidefinite $2n\times2n$
matrix, and set $H(z)=\frac{1}{2}Mz\cdot z$. For $\varepsilon>0$ put
$M_{\varepsilon}=M+\varepsilon I_{2n}>0$, and denote its symplectic
eigenvalues, in non-decreasing order, by
\[
0<\lambda_{1}(\varepsilon)\leq\cdots\leq\lambda_{n}(\varepsilon).
\]
Then the limits $\lambda_{j}=\lim_{\varepsilon\downarrow0}\lambda
_{j}(\varepsilon)$ exist and are nonnegative, and
\begin{equation}
\inf_{X>0,\,Y=Y^{T}}\langle H\rangle_{X,Y}=\frac{\hbar}{2}\sum_{j=1}%
^{n}\lambda_{j}. \label{eq:infimum}%
\end{equation}

\end{theorem}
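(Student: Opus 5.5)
The plan is to reduce everything to Theorem \ref{Thm1} applied to $M_\varepsilon=M+\varepsilon I_{2n}$, and then pass to the limit $\varepsilon\downarrow0$ on both sides.

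\emph{Existence of the limits.} I would use the variational (trace) characterization of symplectic eigenvalues. For fixed $X>0$, $Y=Y^T$ the map $\varepsilon\mapsto \frac{\hbar}{4}\operatorname{Tr}(M_\varepsilon G_{XY}^{-1})$ is nondecreasing. More to the point, the symplectic eigenvalues are monotone: if $0<A\leq B$ then $\lambda_j(A)\leq\lambda_j(B)$ for every $j$ in the ordered list. This follows, for instance, because $\lambda_j(A)$ are the (positive) eigenvalues of $iJA$, which are similar to the eigenvalues of the Hermitian matrix $A^{1/2}(iJ)A^{1/2}$, and a min--max argument applies. Hence each $\lambda_j(\varepsilon)$ is nonincreasing as $\varepsilon\downarrow0$ and bounded below by $0$, so the limits exist and are $\geq0$. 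Alternatively, continuity of the spectrum of $JM_\varepsilon$ in $\varepsilon$ gives the same conclusion directly, with $\pm i\lambda_j$ the eigenvalues of $JM$ (counted with multiplicity, real parts zero since $JM$ is similar to $M^{1/2}JM^{1/2}$ when restricted appropriately).

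\emph{Upper bound.} For every $X,Y$, formula (\ref{trace1}) is purely algebraic and remains valid for $M\geq0$, because the proof of (ii) only integrates a quadratic against a Gaussian. (If one prefers, use $\langle H\rangle_{XY}=\lim_\varepsilon\langle H_\varepsilon\rangle_{XY}$ by linearity in $M$.) Thus
\[
\langle H\rangle_{XY}\leq\langle H_\varepsilon\rangle_{XY}.
\]
Taking the infimum and using Theorem \ref{Thm1}(iii) for $M_\varepsilon$ gives
\[
\inf\langle H\rangle_{XY}\leq\frac{\hbar}{2}\sum_j\lambda_j(\varepsilon),
\]
and letting $\varepsilon\to0$ yields the bound $\leq\frac\hbar2\sum\lambda_j$.

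\emph{Lower bound.} Here
\[
\langle H\rangle_{XY}=\langle H_\varepsilon\rangle_{XY}-\frac{\varepsilon\hbar}{4}\operatorname{Tr}(G_{XY}^{-1}).
\]
The correction term is not uniformly small in $(X,Y)$, since $\operatorname{Tr}G^{-1}$ is unbounded. This is the main obstacle. I would avoid it by working with $M$ directly: write $\operatorname{Tr}(MG^{-1})=\operatorname{Tr}(RR^TM)$-type expressions with $R=S_{XY}^{-1}$ ranging over all of $\operatorname{Sp}(n)$ (modulo $U(n)$, by Proposition \ref{prop2}), and prove
\[
\operatorname{Tr}(MSS^T)\geq 2\sum_j\lambda_j(M)
\]
for all $S\in\operatorname{Sp}(n)$ and all $M\geq0$. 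The cleanest route is to fix $S$ and note that $\operatorname{Tr}(M_\varepsilon SS^T)\geq2\sum\lambda_j(\varepsilon)$ by Lemma \ref{lemmatrace} (through the Williamson form of $M_\varepsilon$). Letting $\varepsilon\to0$ with $S$ fixed then gives
\[
\operatorname{Tr}(MSS^T)\geq2\sum\lambda_j.
\]
Since the limit is taken pointwise in $S$, no uniformity is needed. Multiplying by $\hbar/4$ gives $\langle H\rangle_{XY}\geq\frac\hbar2\sum\lambda_j$ for every $X,Y$.

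Combining the two bounds gives (\ref{eq:infimum}). Note that the infimum need not be attained when $M$ is degenerate, for example when $M$ has a zero block in a conjugate direction and minimizing sequences squeeze along it. So only the infimum, not the minimum, is claimed.
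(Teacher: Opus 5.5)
Your proof is correct and is essentially the paper's argument: both inequalities come from comparing $\langle H\rangle_{X,Y}$ with $\langle H_\varepsilon\rangle_{X,Y}=\langle H\rangle_{X,Y}+\frac{\hbar\varepsilon}{4}\operatorname{Tr}(G_{X,Y}^{-1})$ at a \emph{fixed} pair $(X,Y)$ and then invoking Theorem~\ref{Thm1}(iii) for $M_\varepsilon$. Your pointwise lower bound is the paper's step $\limsup_{\varepsilon\downarrow0}m_\varepsilon\leq m+\delta$, phrased without a near-minimizer. The only difference is how you get existence of the limits: you use Loewner monotonicity of symplectic eigenvalues, so each $\lambda_j(\varepsilon)$ is monotone and bounded below, whereas the paper uses Weyl's perturbation bound for $iM_\varepsilon^{1/2}JM_\varepsilon^{1/2}$; both work, and your continuity fallback is essentially the paper's route.
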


\begin{proof}
Set $H_{\varepsilon}(z)=\frac{1}{2}M_{\varepsilon}z\cdot z=H(z)+\frac
{\varepsilon}{2}|z|^{2}$ and
\[
m_{\varepsilon}=\inf_{X>0,\,Y=Y^{T}}\langle H_{\varepsilon}\rangle
_{X,Y},\qquad m=\inf_{X>0,\,Y=Y^{T}}\langle H\rangle_{X,Y}.
\]
Since $M\geq0$, we have $m\geq0$. By the positive definiteness result in
Theorem \ref{Thm1} we have,
\begin{equation}
m_{\varepsilon}=\frac{\hbar}{2}\sum_{j=1}^{n}\lambda_{j}(\varepsilon).
\label{eq:meps}%
\end{equation}

\end{proof}

For every admissible $(X,Y)$, formula (\ref{trace1}) yields
\begin{align}
\langle H_{\varepsilon}\rangle_{X,Y}  &  =\frac{\hbar}{4}\operatorname{Tr}%
((M+\varepsilon I_{2n})G_{X,Y}^{-1})\nonumber\\
&  =\langle H\rangle_{X,Y}+\frac{\hbar\varepsilon}{4}\operatorname{Tr}%
(G_{X,Y}^{-1})\geq\langle H\rangle_{X,Y}. \label{eq:comparison}%
\end{align}
Taking infima gives $m_{\varepsilon}\geq m$, hence
\begin{equation}
\liminf_{\varepsilon\downarrow0}m_{\varepsilon}\geq m. \label{eq:liminf}%
\end{equation}

Fix $\delta>0$. By the definition of the infimum, there exists an admissible
pair $(X_{\delta},Y_{\delta})$ satisfying
\begin{equation}
\langle H\rangle_{X_{\delta},Y_{\delta}}<m+\delta. \label{eq:approx}%
\end{equation}
Crucially, this pair is fixed as $\varepsilon\downarrow0$. By
\eqref{eq:comparison} and \eqref{eq:approx},
\begin{align*}
m_{\varepsilon}  &  \leq\langle H_{\varepsilon}\rangle_{X_{\delta},Y_{\delta}%
}\\
&  =\langle H\rangle_{X_{\delta},Y_{\delta}}+\frac{\hbar\varepsilon}%
{4}\operatorname{Tr}(G_{X_{\delta},Y_{\delta}}^{-1})\\
&  <m+\delta+\varepsilon C_{\delta},
\end{align*}
where $C_{\delta}=\frac{\hbar}{4}\operatorname{Tr}(G_{X_{\delta},Y_{\delta}%
}^{-1})<\infty$ is independent of $\varepsilon$. Therefore
\[
\limsup_{\varepsilon\downarrow0}m_{\varepsilon}\leq m+\delta.
\]
Since $\delta>0$ is arbitrary, combining this with \eqref{eq:liminf} gives
\begin{equation}
\lim_{\varepsilon\downarrow0}m_{\varepsilon}=m. \label{eq:convergence}%
\end{equation}
No convergence of regularized minimizers is needed. The Hermitian matrix
\[
A_{\varepsilon}=iM_{\varepsilon}^{1/2}JM_{\varepsilon}^{1/2}%
\]
has eigenvalues $-\lambda_{n}(\varepsilon),\ldots,-\lambda_{1}(\varepsilon
),\lambda_{1}(\varepsilon),\ldots,\lambda_{n}(\varepsilon)$, counted with
multiplicity. Continuity of the positive semidefinite matrix square root
implies
\[
A_{\varepsilon}\longrightarrow A_{0}=iM^{1/2}JM^{1/2}\quad\text{in operator
norm}.
\]
By Weyl's eigenvalue perturbation inequality, the ordered eigenvalues of these
Hermitian matrices converge. In particular, each $\lambda_{j}(\varepsilon)$
has a nonnegative limit $\lambda_{j}$. These are the limiting symplectic
spectral values of $M$; some may vanish. Finally, \eqref{eq:meps} and
\eqref{eq:convergence} imply
\[
m=\lim_{\varepsilon\downarrow0}m_{\varepsilon}=\frac{\hbar}{2}\sum_{j=1}%
^{n}\lim_{\varepsilon\downarrow0}\lambda_{j}(\varepsilon)=\frac{\hbar}{2}%
\sum_{j=1}^{n}\lambda_{j},
\]
which proves \eqref{eq:infimum}.

\begin{example}
[Nonattainment]Take $n=1$ and $M=\operatorname{diag}(1,0)$, so $H(x,p)=\frac
{1}{2}x^{2}$. For $X>0$ and $Y=0$, let
\[
\psi_{X}(x)=\left(  \frac{X}{\pi\hbar}\right)  ^{1/4}\exp\left(  -\frac
{Xx^{2}}{2\hbar}\right)  .
\]
Then
\[
\langle H\rangle_{X}=\frac{1}{2}\int_{\mathbb{R}}x^{2}|\psi_{X}(x)|^{2}%
\,dx=\frac{\hbar}{4X}\longrightarrow0\qquad(X\rightarrow\infty).
\]
Thus the infimum is zero, but no normalized nonzero Gaussian attains it, since
its position variance is strictly positive. Meanwhile,
\[
M_{\varepsilon}=\operatorname{diag}(1+\varepsilon,\varepsilon),\qquad
\lambda_{1}(\varepsilon)=\sqrt{\varepsilon(1+\varepsilon)}\rightarrow0,
\]
in agreement with Theorem~\ref{Thm2}.
\end{example}

\begin{remark}
[Nonattainment]For one degree of freedom let $H(x,p)=\frac{1}{2}x^{2}$, so
$M=\operatorname{diag}(1,0)$ and $\lambda_{1}=0$. The normalized Gaussian
\[
\psi_{a}(x)=\left(  \frac{a}{\pi\hbar}\right)  ^{1/4}\exp\left(  -\frac
{ax^{2}}{2\hbar}\right)  ,\quad a>0,
\]
satisfies $\langle\widehat{H}\psi_{a},\psi_{a}\rangle=\hbar/(4a)\rightarrow0$.
No normalized $L^{2}$ state attains zero energy, because $\int x^{2}%
|\psi(x)|^{2}dx=0$ implies $\psi=0$ almost everywhere.
\end{remark}

\begin{remark}
[Attainment can occur for singular $M$]For two degrees of freedom take
$H=\frac{1}{2}(x_{1}^{2}+p_{1}^{2})$, independent of $(x_{2},p_{2})$. The
Gaussian minimum is $\hbar/2$, attained by the harmonic oscillator ground
state in the first variable and any centered normalized pure Gaussian in the
second. More generally, if a semidefinite $M$ admits a symplectic Williamson
diagonalization $S_{0}^{T}MS_{0}=\operatorname{diag}(\lambda_{1}%
,\ldots,\lambda_{n},\lambda_{1},\ldots,\lambda_{n})$ with $\lambda_{j}\geq0$,
the infimum is attained. A general semidefinite matrix need not admit such a diagonalization.
\end{remark}

\subsection{No-nattainment: an example}

We return to Qin's \cite{Qin2} water-bag example with Hamiltonian function
\[
H=\frac{1}{2}(p_{x}^{2}+p_{y}^{2}+x^{2}),
\]
studied in Section \ref{sec1}. We take as initial state the normalized
standard Gaussian
\[
\psi_{0}(x,y)=\frac{1}{\sqrt{\pi\hbar}}\exp\left(  -\frac{x^{2}+y^{2}}{2\hbar
}\right)
\]
whose Wigner transform is
\[
W\psi_{0}(x,y;p_{x},p_{y})=\left(  \frac{1}{\pi\hbar}\right)  ^{2}\exp\left(
-\frac{x^{2}+y^{2}+p_{x}^{2}+p_{y}^{2}}{\hbar}\right)  .
\]
Let $\widehat{H}$ be the Weyl operator corresponding to $H$:
\[
\widehat{H}=\frac{1}{2}\left(  -\hbar^{2}\frac{\partial^{2}}{\partial x^{2}%
}-\hbar^{2}\frac{\partial^{2}}{\partial y^{2}}+x^{2}\right)  .
\]
We have
\[
(\widehat{H}\psi_{0}\mid\psi_{0})_{L^{2}}=\int_{\mathbb{R}^{4}}H(z)W\psi
_{0}(z)\mathrm{d}z
\]
where $z=(x,y,p_{x},p_{y})$. After a few calculations we this have
\[
\int_{\mathbb{R}^{4}}H(z)W\psi_{0}(z)\mathrm{d}z=\frac{1}{2}\left(
\frac{\hbar}{2}+\frac{\hbar}{2}+\frac{\hbar}{2}\right)  =\frac{3\hbar}{4}.
\]
Let $\lambda>0$ be a squeezing parameter and consider $S_{\lambda}%
\in\operatorname{Sp}(2)$ defined by
\[
S_{\lambda}(x,y,p_{x},p_{y})=(x,\lambda y,p_{x},\lambda^{-1}p_{y}).
\]
Let $\widehat{S}_{\lambda}\in\operatorname{Mp}(2)$ be anyone of the two
metaplectic operators associated with The energy of the transformed state is
therefore
\[
E(\lambda)=(\widehat{H}\widehat{S}_{\lambda}\psi_{0}|\widehat{S}_{\lambda}%
\psi_{0})_{L^{2}}=\frac{1}{2}\left(  \frac{\hbar}{2}+\frac{\hbar}{2\lambda
^{2}}+\frac{\hbar}{2}\right)  =\frac{\hbar}{2}+\frac{\hbar}{4\lambda^{2}}.
\]
Since in particular $E(1)=3\hbar/4$, we obtain, for every $\lambda>1$,
\[
E(\lambda)<E(1),\qquad\inf_{\lambda>1}E(\lambda)=\frac{\hbar}{2}%
\]
(but the infimum is not attained for any finite $\lambda>1$). Thus the state
$\psi_{0}$ is not a minimizer of the energy on its metaplectic orbit, although
the energy possesses a finite infimum along the squeezing family.

\section{Perspectives}

We have addressed the Gromov ground-state problem in a particular, yet
fundamental, setting: quadratic Hamiltonians and Gaussian quantum states
associated with quantum blobs. Although restrictive, this framework provides a
natural meeting point between symplectic geometry, phase-space methods, and
the variational structure of quantum mechanics. It also suggests several
directions for further investigation.

A first extension concerns more general phase-space ellipsoids. Instead of
restricting ourselves to quantum blobs, we may consider covariance ellipsoids
associated with arbitrary Gaussian states, including mixed states. The natural
admissibility condition is the quantum uncertainty constraint
\begin{equation}
\Sigma+\frac{i\hbar}{2}J\geq0,
\end{equation}
which ensures that the corresponding centered Gaussian phase-space
distribution is the Wigner transform of a positive density operator. Such a
distribution, denoted by $\rho$, admits a convex decomposition into pure-state
Wigner distributions,
\begin{equation}
\rho=\sum_{j}\alpha_{j}W\psi_{j},\qquad\alpha_{j}\geq0,\qquad\sum_{j}%
\alpha_{j}=1.
\end{equation}
In particular, if such a Gaussian mixed state can be represented as a
continuous convex mixture of phase-space translates of a fixed pure Gaussian
state the problem should be easily solvable for quadratic Hamiltonians using
Theorems \ref{Thm1} and \ref{Thm2}.

A more challenging question arises when one allows arbitrary decompositions
into pure states, or when the admissible states are no longer Gaussian. The
special role played by Gaussian states in the present work stems from the fact
that their Wigner distributions and covariance matrices are explicitly
controlled by symplectic transformations. For general states, the Wigner
distribution may change sign, and the covariance matrix no longer determines
the state. Extending our variational results to this setting would therefore
require additional analytical tools, possibly involving uncertainty
inequalities, phase-space concentration estimates, and the geometry of the
metaplectic representation as developed in previous work..

A second major direction is the extension to non-quadratic Hamiltonians. The
quadratic case is distinguished by the fact that the expectation of the
Weyl-quantized Hamiltonian depends only on the first and second moments of the
state. For more general Hamiltonians, higher-order features of the Wigner
distribution become relevant, and the minimization problem can no longer be
reduced to a finite-dimensional optimization over covariance matrices. It
would be particularly interesting to determine whether symplectic capacities
and related rigidity phenomena can still provide useful constraints on the
attainable energy.

Finally, an important question is whether these quantum phase-space methods
can shed new light on the original classical Gromov ground-state problem in
plasma physics. In particular, the possibility of approximating more general
phase-space distributions by suitable mixtures of Gaussian states suggests a
potential bridge between the variational theory developed here and the
energy-rearrangement problems arising in Hamiltonian plasma dynamics.
Establishing such a connection would require a careful analysis of the
admissible transformations and of the relation between quantum uncertainty
constraints and classical phase-space invariants.

These questions indicate that the Gaussian quadratic setting studied in this
paper may serve as a starting point for a broader investigation of energy
minimization under symplectic constraints, with possible applications to both
quantum mechanics and plasma physics.

\begin{acknowledgement}
We are happy to thank Prof. Leonid Polterovich (Tel Aviv) and Prof. Hong Qin
(Masison-wisconsin) or useful conversations and comments.
\end{acknowledgement}

\end{document}